\newcommand{\commentout}[1]{}
\newcommand{\sema}[1]{{\llbracket}#1{\rrbracket}}
\newtheorem{myclaim}[theorem]{Claim}
\newcommand{\proc}[1]{\ensuremath{\textbf{#1}}}
\newcommand{\aut}[1]{\mathcal{#1}}
\newcommand{\class}[1]{\ensuremath{\mathbb{#1}}}
\newcommand{\IB}{\class{IB}}
\newcommand{\IC}{\class{IC}}
\newcommand{\IP}{\class{IP}}
\newcommand{\IM}{\class{IM}}
\newcommand{\Inf}[1]{\ensuremath{\textit{Inf}_{#1}}}
\newcommand{\buchi}{B\"uchi}
\newcommand{\cobuchi}{coB\"uchi}
\newcommand{\initstate}{\ensuremath{q_\iota}}
\newcommand{\A}{{\aut{A}}}
\newcommand{\M}{{\aut{M}}}
\newcommand{\la}{\langle}
\newcommand{\ra}{\rangle}
\newcommand{\naturals}{\mathbb{N}}
\tikzset{
	setofstates/.style={
		rectangle,
		rounded corners,
		draw=black,
		text centered,
		inner sep=7
	},
}
\begin{document}

\title{Polynomial time algorithms for inclusion and equivalence of
  deterministic omega acceptors\thanks{This research was supported by grant 2016239 from the United States -- Israel Binational Science Foundation (BSF).}}
\titlerunning{Deterministic omega acceptor inclusion and equivalence in polynomial time}

\author{Dana Angluin\inst{1} \and Dana Fisman\inst{2}
  \institute{Yale University \and Ben-Gurion University}
}
 
  \maketitle              
  
  \begin{abstract}
  The class of omega languages recognized by deterministic parity
  acceptors (DPAs) or deterministic Muller acceptors (DMAs)
  is exactly the regular omega languages.  The inclusion
  problem is the following: given two acceptors
  $\aut{A}_1$ and $\aut{A}_2$, determine whether
  the language recognized by $\aut{A}_1$
  is a subset of the language recognized by $\aut{A}_2$,
  and if not, return an ultimately periodic omega word accepted
  by $\aut{A}_1$ but not $\aut{A}_2$.
  We describe polynomial time algorithms
  to solve this problem for two DPAs and for two DMAs.
  Corollaries include polynomial time algorithms
  to solve the equivalence problem for DPAs and DMAs,
  and also the inclusion and equivalence problems for
  deterministic \buchi\ and \cobuchi\ acceptors.
\end{abstract}

\section{Preliminaries}

We denote the set of nonnegative integers by $\naturals$,
and for any $m, n \in \naturals$, we let
$[m..n] = \{j \in \naturals \mid m \le j \wedge j \le n\}.$

\subsection{Words and Automata}

If $\Sigma$ is a finite alphabet of symbols, then
$\Sigma^*$ denotes the set of all finite words over $\Sigma$,
$\varepsilon$ denotes the empty word,
$|u|$ denotes the length of the finite word $u$,
and $\Sigma^+$ denotes the set of all nonempty finite words over $\Sigma$.
Also, $\Sigma^{\omega}$ denotes
the set of all infinite words over $\Sigma$, termed $\omega$-words.
For a finite or infinite word $w$, the successive symbols are
indexed by positive integers, and $w[i]$ denotes the symbol with
index $i$.
For words $u \in \Sigma^*$ and $v \in \Sigma^+$, $u(v)^{\omega}$ denotes
the ultimately periodic $\omega$-word consisting of $u$ followed by infinitely many
copies of $v$.

A \emph{complete deterministic automaton} is a tuple
$\M = \la \Sigma, Q, q_\iota ,\delta \ra$
consisting of a finite alphabet $\Sigma$ of symbols,
a finite set $Q$ of states,
an initial state $q_\iota\in Q$,
and a transition function ${\delta: Q \times \Sigma \rightarrow Q}$.
In this paper, \emph{automaton} will mean a complete deterministic automaton.
We extend $\delta$ to the domain $Q \times \Sigma^*$ inductively in the usual way,
and for $u \in \Sigma^*$, define $\M(u) = \delta(q_{\iota},u)$, the
state of $\M$ reached on input $u$.
Also, for a state $q \in Q$, $\M^q$ denotes the automaton
$\la \Sigma, Q, q ,\delta \ra$, in which the initial state
has been changed to $q$.

The {\it run} of an automaton $\M$
on an input $u \in \Sigma^*$ is the sequence
of states $q_0, q_1, \ldots, q_k$, where $k = |u|$, $q_0 = q_\iota$,
and for each $i \in [1..k]$,
$q_i = \delta(q_{i-1},u[i])$.
The {\it run} of $\M$ on an input $w \in \Sigma^{\omega}$ is the
infinite sequence of states $q_0, q_1, q_2, \ldots$, where $q_0 = q_\iota$, and
for each positive integer $i$,
$q_i = \delta(q_{i-1},w[i])$.
For an infinite word $w \in \Sigma^{\omega}$,
let $\Inf{\M}(w)$ denote the set of states of $\M$ that appear
infinitely often in the run of $\M$ on input $w$.

A state $q$ of an automaton $\M$ is \emph{reachable} if and only if
there exists a finite word $u \in \Sigma^*$ such that $\M(u) = q$.
We may restrict $\M$ to contain only its reachable states without
affecting its finite or infinite runs.

For any automaton
$\M = \la \Sigma, Q, q_\iota ,\delta \ra$
we may construct a related directed graph $G(\M) = (V,E)$ as follows.
The set $V$ of vertices is just the set of states $Q$, and there
is a directed edge $(q_1,q_2) \in E$ if and only if for some
symbol $\sigma \in \Sigma$ we have $\delta(q_1,\sigma) = q_2$.
By processing every pair $(q,\sigma) \in Q \times \Sigma$, the
set of edges $(q_1,q_2) \in E$ may be constructed in time
$O(|\Sigma| \cdot |Q|)$ using a hash table representation of $E$.

There are some differences in the terminology related to strong connectivity
between graph theory and omega automata, which we resolve as follows.
In graph theory,
a \emph{path} of length $k$ from $u$ to $v$
in a directed graph $(V,E)$ is a finite
sequence of vertices $v_0, v_1, \ldots, v_k$
such that
$u = v_0$, $v = v_k$ and
for each
$i$ with $i \in [1..k]$, $(v_{i-1},v_i) \in E$.
Thus, for every vertex $v$, there is a path of length $0$
from $v$ to $v$.
A set of vertices $S$ is \emph{strongly connected}
if and only if for all $u, v \in S$, there is a path of
some nonnegative length from $u$ to $v$
and all the vertices in the path are elements of $S$.
Thus, for every vertex $v$, the singleton set $\{v\}$ is
a strongly connected set of vertices.
A \emph{strongly connected component} of a directed graph is
a maximal strongly connected set of vertices.
There is a linear time algorithm to find the set of
strong components of a directed graph~\cite{Tarjan72}.

In the theory of omega automata,
a \emph{strongly connected component}
(SCC) of $A$ is a nonempty set $C \subseteq Q$ of states such that for any
$q_1, q_2 \in C$, there exists a nonempty word $u$ such that
$\delta(q_1,u) = q_2$,
and for every prefix $u'$ of $u$, $\delta(q_1,u') \in C$.
Note that a SCC of $A$ need not be maximal, and that a single state
$q$ of $A$ is not a SCC of $A$ unless for some symbol $\sigma \in \Sigma$
we have $\delta(q,\sigma) = q$.

In this paper, we use the terminology \emph{SCC} and \emph{maximal SCC}
to refer to the definitions from the theory of omega automata,
and the terminology 
\emph{graph theoretic strongly connected components} 
to refer to the definitions from graph theory.
Additionally, we use the term \emph{trivial strong component} to refer
to a graph theoretic strongly connected component that is a
singleton vertex $\{v\}$ such that there is no edge $(v,v)$.
Then if $\M$ is an automaton, the maximal SCCs of $\M$
are the graph theoretic strongly connected components of $G(\M)$
with the exception of the trivial strong components.
The following is a direct consequence of the definitions.

\begin{myclaim}
  For any automaton $\M$ and any $w \in \Sigma^{\omega}$,
  $\Inf{\M}(w)$ is a SCC of $\M$.
\end{myclaim}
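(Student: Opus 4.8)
The plan is to verify directly the two requirements that the omega-automata definition places on $C = \Inf{\M}(w)$: that $C$ is nonempty, and that for any $q_1, q_2 \in C$ there is a nonempty word $u$ with $\delta(q_1, u) = q_2$ such that every prefix of $u$ keeps the run inside $C$.

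Nonemptiness is immediate from a counting argument: the run $q_0, q_1, q_2, \ldots$ of $\M$ on $w$ is infinite while $Q$ is finite, so at least one state must recur infinitely often and therefore belongs to $C$.

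The key auxiliary fact I would establish is a \emph{tail property}: there exists an index $N$ such that $q_i \in C$ for all $i \ge N$. This is where the substance of the argument sits. Every state not in $C$ appears only finitely often in the run, by the definition of $\Inf{\M}(w)$, and there are only finitely many states overall; hence there is a last position at which any non-$C$ state occurs, and taking $N$ to be one more than that position guarantees that the run visits only states of $C$ from $N$ onward.

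With the tail property in hand, the connectivity requirement follows easily and is not a real obstacle. Given arbitrary $q_1, q_2 \in C$, both recur infinitely often, so I would first choose an index $i \ge N$ with $q_i = q_1$, and then, using that $q_2$ occurs infinitely often and hence at some position beyond $i$, choose $j > i$ with $q_j = q_2$. Setting $u = w[i+1]\,w[i+2]\cdots w[j]$ yields a nonempty word (since $j > i$) with $\delta(q_1, u) = q_j = q_2$. Moreover, for any prefix $u'$ of $u$ we have $\delta(q_1, u') = q_{i + |u'|}$, and since $i \le i + |u'| \le j$ the index $i + |u'|$ is at least $N$, so this intermediate state lies in $C$. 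This verifies both clauses of the definition and shows $C$ is a SCC of $\M$.
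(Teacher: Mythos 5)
Your proof is correct and complete: the paper itself states this claim without proof, calling it ``a direct consequence of the definitions,'' and your argument (pigeonhole for nonemptiness, the tail property that the run eventually stays inside $\Inf{\M}(w)$, and then extracting the connecting word from the run) is exactly the standard elaboration of that remark. Nothing is missing --- in particular you correctly handle the case $q_1 = q_2$, since $q_2$ recurs infinitely often and so occurs at some position strictly beyond $i$, giving a nonempty loop word.
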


\paragraph{The Product of Two Automata.}
Suppose ${\M}_1$ and ${\M}_2$ are automata with the same alphabet $\Sigma$,
where for $i = 1,2$,
${\M}_i = \la \Sigma, Q_i,(q_\iota)_i ,\delta_i \ra$.
Their product automaton, denoted ${\M}_1 \times {\M}_2$,
is the deterministic automaton
$\M = \la \Sigma, Q, q_\iota ,\delta \ra$
such that
$Q = Q_1 \times Q_2$, 
the set of ordered pairs of states of ${\M}_1$ and ${\M}_2$,
$q_\iota = ((q_\iota)_1,(q_\iota)_2)$, 
the pair of initial states of the two automata,
and for all $(q_1,q_2) \in Q$ and $\sigma \in \Sigma$,
$\delta((q_1,q_2), \sigma) = (\delta_1(q_1,\sigma),\delta_2(q_2, \sigma))$.
For $i = 1,2$, let $\pi_i$ be projection onto the $i$th coordinate,
so that for a subset $S$ of $Q$, 
$\pi_1(S) = \{q_1 \mid \exists q_2 (q_1,q_2) \in S\}$,
and analogously for $\pi_2$.

\subsection{Acceptors}

If $\M = \la \Sigma, Q, q_\iota,$ $\delta \ra$ is an automaton,
we may augment it with an acceptance condition $\alpha$ to
get a \emph{complete deterministic acceptor}
$\A = \la \Sigma, Q, q_\iota, \delta, \alpha \ra$,
which is
a machine that accepts some words and rejects others.
In this paper, an \emph{acceptor} will mean a complete deterministic acceptor.
If $q \in Q$, we use the notation $\A^q$ for the acceptor
$\la \Sigma, Q, q, \delta, \alpha \ra$, in which the
initial state has been changed to the state $q$.
An acceptor accepts a word if the run on that word is accepting,
as defined below for the types of acceptors we consider.
For finite words the acceptance condition is a set $F \subseteq Q$
and the run on a word $v \in \Sigma^*$ is accepting
iff it ends in an accepting state,
that is, $\M(v) \in F$.
We use DFA to denote the class of acceptors of
finite words, and \class{DFA} for the languages they accept, which
is the class of regular languages.

For $\omega$-words $w$, 
there are various acceptance conditions in the literature;
we consider four of them: \buchi, \cobuchi, parity, and Muller, which
are all based on $\Inf{\M}(w)$, the set of states visited infinitely
often in the run of the automaton on the input $w$.

The \buchi\ and \cobuchi\ acceptance conditions are also specified by
a set $F \subseteq Q$.
The run of a \buchi\ acceptor on an input word $w \in \Sigma^{\omega}$
is accepting iff it visits at least one state in $F$ infinitely often,
that is, $\Inf{\M}(w) \cap F \neq \emptyset$.
The run of a \cobuchi\ acceptor on an input word $w \in \Sigma^{\omega}$
is accepting iff it visits $F$ only finitely many times,
that is, $\Inf{\M}(w) \cap F = \emptyset$.

A parity acceptance condition is a map
$\kappa:Q \rightarrow \naturals$ assigning to each state a natural number
termed a color (or priority).
We extend $\kappa$ to sets of states in the natural way,
that is, for $S \subseteq Q$, $\kappa(S) = \{\kappa(q) \mid q \in S\}$.
For a parity acceptor $\aut{P}$ and an $\omega$-word $w$,
we denote by $\aut{P}(w)$ the minimum color of all states
visited infinitely often by $\aut{P}$ on input $w$, that is,
\[\aut{P}(w) = \min(\kappa(\Inf{\M}(w))).\]
The run of a parity acceptor $\aut{P}$
on an input word $w \in \Sigma^{\omega}$
is accepting iff the minimum color visited infinitely often is odd,
that is, $\aut{P}(w)$ is odd.

A Muller\ acceptance condition is a family of final state sets
$\mathcal{F}=\{F_1,\ldots,F_k\}$ for some $k\in\naturals$ and
$F_i\subseteq Q$ for $i \in [1..k]$.
The run of a Muller acceptor is accepting
iff the set of states visited infinitely often in the run
is an element of $\mathcal{F}$,
that is, $\Inf{M}(w) \in \mathcal{F}$.

To measure the running times of algorithms taking acceptors
as inputs, we specify size measures for these acceptors.
For an automaton
$\M = \la \Sigma, Q, q_\iota, \delta \ra$,
we specify its size as $|Q| \cdot |\Sigma|$, which is the
number of entries in a table that explicitly specifies the
transition function $\delta$.
Note that the size of the product automaton $\M_1 \times \M_2$
is at most $|\Sigma| \cdot |Q_1| \cdot |Q_2|$, where for $i = 1,2$,
$Q_i$ is the set of states of $\M_i$.
For a deterministic \buchi, \cobuchi, or parity acceptor,
the space to specify the acceptance condition is dominated
by the size of the automaton, so that is the size of the acceptor.
However, for deterministic Muller acceptors, we add the quantity
$|Q| \cdot |\mathcal{F}|$, which bounds the size of a table
explicitly specifying the membership of each state in each
final state set in $\mathcal{F}$.\footnote{See~\cite{Boker17} for
a discussion regarding size of Muller automata in the literature.}

We use $\sema{\A}$ to denote the set of words 
accepted by a given acceptor $\A$.
Two acceptors $\aut{A}_1$ and $\aut{A}_2$ are \emph{equivalent}
iff $\sema{\aut{A}_1}=\sema{\aut{A}_2}$.
We use DBA, DCA, DPA, and DMA for the classes of (deterministic)
\buchi, \cobuchi, parity, and Muller acceptors.
We use \class{DBA}, \class{DCA}, \class{DPA}, and \class{DMA}
for the classes of languages they recognize, respectively.
\class{DPA} and \class{DMA} are each 
the full class of regular $\omega$-languages,
while \class{DBA} and \class{DCA} are proper subclasses.

We observe the following known facts 
about the relationships between DBAs, DCAs and DPAs.

\begin{myclaim}
  \label{DBA-DCA-complement}
  Let $\aut{B} = \aut{C} = \la \Sigma, Q, q_\iota, \delta, F \ra$, where
  $\aut{B}$ is a DBA and $\aut{C}$ is a DCA.
  Then the languages recognized by $\aut{B}$ and $\aut{C}$ are complements
  of each other, that is,
  $\sema{\aut{B}} = \Sigma^{\omega} \setminus \sema{\aut{C}}$.
\end{myclaim}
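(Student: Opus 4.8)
The plan is to exploit the fact that $\aut{B}$ and $\aut{C}$ differ only in their acceptance condition, sharing the identical underlying automaton $\M = \la \Sigma, Q, q_\iota, \delta \ra$. Since this automaton is deterministic and complete, every $\omega$-word $w$ induces exactly one run, and that run is determined solely by $\Sigma$, $Q$, $q_\iota$, and $\delta$ --- it does not depend on the set $F$. Consequently the set of states visited infinitely often is the same regardless of whether we regard the machine as $\aut{B}$ or as $\aut{C}$; I would denote this common set by $\Inf{\M}(w)$.

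First I would fix an arbitrary $w \in \Sigma^{\omega}$ and invoke the two definitions directly. By the \buchi\ condition, $w \in \sema{\aut{B}}$ iff $\Inf{\M}(w) \cap F \neq \emptyset$, while by the \cobuchi\ condition, $w \in \sema{\aut{C}}$ iff $\Inf{\M}(w) \cap F = \emptyset$. These are predicates applied to one and the same set $\Inf{\M}(w)$, and they are exact logical negations of one another, so precisely one of them holds. Hence $w$ is accepted by $\aut{B}$ if and only if it is not accepted by $\aut{C}$. Since $w$ was arbitrary, this pointwise equivalence lifts immediately to the set identity $\sema{\aut{B}} = \Sigma^{\omega} \setminus \sema{\aut{C}}$.

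I do not expect a genuine obstacle: the statement is essentially definitional. The only point that must be made carefully is the observation in the first paragraph --- that the run, and therefore $\Inf{\M}(w)$, is unaffected by the choice of acceptance condition --- since it is exactly this fact that licenses applying the two acceptance tests to a single shared set $\Inf{\M}(w)$ and reading off their complementarity.
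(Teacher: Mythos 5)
Your proof is correct: the paper states this claim as a known fact without proof, and your argument is exactly the definitional one the paper implicitly relies on, namely that the run and hence $\Inf{\M}(w)$ depend only on the shared automaton, and the B\"uchi and \cobuchi\ acceptance tests are logical negations of each other on that set. Nothing further is needed.
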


\begin{myclaim}
  \label{DBA-to-DPA}
    Let the DBA $\aut{B} = \la \Sigma, Q, q_\iota, \delta, F \ra$.
    Define the coloring $\kappa_{\aut{B}}(q) = 1$ for all $q \in F$ and
    $\kappa_{\aut{B}}(q) = 2$ for all $q \in (Q \setminus F)$.
    Define the DPA
    $\aut{P} =  \la \Sigma, Q, q_\iota, \delta, \kappa_{\aut{B}} \ra$.
    Then $\aut{B}$ and $\aut{P}$ accept the same language, that is,
    $\sema{\aut{B}} = \sema{\aut{P}}$.
\end{myclaim}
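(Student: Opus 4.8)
The plan is to exploit the fact that $\aut{B}$ and $\aut{P}$ are built on the identical underlying automaton $\M = \la \Sigma, Q, q_\iota, \delta \ra$ and differ only in their acceptance conditions. Consequently, for every $\omega$-word $w \in \Sigma^{\omega}$ the two acceptors produce the same run, and in particular the set $\Inf{\M}(w)$ of states visited infinitely often is the same regardless of whether we view $\M$ as the automaton underlying $\aut{B}$ or underlying $\aut{P}$. I would fix an arbitrary $w$, write $I = \Inf{\M}(w)$, and show that $\aut{B}$ accepts $w$ if and only if $\aut{P}$ accepts $w$; since $w$ is arbitrary this yields $\sema{\aut{B}} = \sema{\aut{P}}$.

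The key step is to track the range of the coloring on $I$. By construction $\kappa_{\aut{B}}$ takes only the two values $1$ and $2$, assigning $1$ to the states of $F$ and $2$ to the states of $Q \setminus F$. Hence $\kappa_{\aut{B}}(I) \subseteq \{1,2\}$, so $\min(\kappa_{\aut{B}}(I))$ equals $1$ precisely when $I$ contains at least one state of $F$, i.e.\ when $I \cap F \neq \emptyset$, and equals $2$ otherwise. Note that $I$ is nonempty (being a SCC of $\M$ by the preceding claim), so this minimum is well defined.

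It then remains to unwind the two acceptance conditions. The parity acceptor $\aut{P}$ accepts $w$ iff $\aut{P}(w) = \min(\kappa_{\aut{B}}(I))$ is odd; since the only odd value in $\{1,2\}$ is $1$, this happens iff $\min(\kappa_{\aut{B}}(I)) = 1$, which by the previous step is equivalent to $I \cap F \neq \emptyset$. On the other hand, the \buchi\ acceptor $\aut{B}$ accepts $w$ iff $\Inf{\M}(w) \cap F = I \cap F \neq \emptyset$. The two conditions coincide, completing the argument.

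I do not expect a genuine obstacle here: once one observes that the runs of $\aut{B}$ and $\aut{P}$ agree, the statement is a short bookkeeping exercise reducing both acceptance tests to the single condition $I \cap F \neq \emptyset$. The only point requiring a moment's care is confirming that $I$ is nonempty, so that the minimum defining $\aut{P}(w)$ exists; this is exactly what the earlier claim that $\Inf{\M}(w)$ is a SCC of $\M$ provides.
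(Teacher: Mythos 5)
Your argument is correct and complete: reducing both acceptance conditions to the single test $\Inf{\M}(w) \cap F \neq \emptyset$, via the observation that $\min(\kappa_{\aut{B}}(\Inf{\M}(w)))$ is $1$ exactly when that intersection is nonempty, is precisely the standard justification for this claim. The paper itself states Claim~\ref{DBA-to-DPA} as a known fact without proof, so there is nothing to compare against; your write-up supplies the expected argument, including the minor but worthwhile check that $\Inf{\M}(w)$ is nonempty so the minimum is defined.
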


Analogously, for DCAs we have the following.

\begin{myclaim}
  \label{DCA-to-DPA}
    Let the DCA $\aut{B} = \la \Sigma, Q, q_\iota, \delta, F \ra$.
    Define the coloring $\kappa_{\aut{C}}(q) = 0$ for all $q \in F$ and
    $\kappa_{\aut{C}}(q) = 1$ for all $q \in (Q \setminus F)$.
    Define the DPA
    $\aut{P} =  \la \Sigma, Q, q_\iota, \delta, \kappa_{\aut{C}} \ra$.
    Then $\aut{C}$ and $\aut{P}$ accept the same language, that is,
    $\sema{\aut{C}} = \sema{\aut{P}}$.
\end{myclaim}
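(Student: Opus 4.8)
The plan is to argue, just as in the proof of Claim~\ref{DBA-to-DPA}, entirely through the set $\Inf{\M}(w)$ of states visited infinitely often, exploiting the fact that the DCA $\aut{C}$ and the DPA $\aut{P}$ share the same underlying automaton $\M$. Consequently, for every $w \in \Sigma^{\omega}$ the run of $\aut{C}$ on $w$ coincides with the run of $\aut{P}$ on $w$, so $\Inf{\M}(w)$ is the same set for both acceptors, and it remains only to check that the two acceptance conditions agree on this set.

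First I would fix $w \in \Sigma^{\omega}$ and set $S = \Inf{\M}(w)$. Since the run is infinite and $Q$ is finite, $S$ is a nonempty finite set, so $\min(\kappa_{\aut{C}}(S))$ is well defined. The crucial observation is that $\kappa_{\aut{C}}$ takes only the two values $0$ and $1$: it assigns $0$ to the states of $F$ and $1$ to the states of $Q \setminus F$. Hence $\aut{P}(w) = \min(\kappa_{\aut{C}}(S))$ equals $0$ when some state of $S$ lies in $F$, and equals $1$ when no state of $S$ lies in $F$. In other words, $\aut{P}(w)$ is odd precisely when $S \cap F = \emptyset$.

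It then suffices to compare this with the \cobuchi\ condition. By definition $\aut{P}$ accepts $w$ iff $\aut{P}(w)$ is odd, which by the previous step holds iff $S \cap F = \emptyset$; and $\aut{C}$ accepts $w$ iff $\Inf{\M}(w) \cap F = \emptyset$, i.e. iff $S \cap F = \emptyset$. Thus $\aut{P}$ and $\aut{C}$ accept exactly the same $\omega$-words, giving $\sema{\aut{C}} = \sema{\aut{P}}$. I do not anticipate any real obstacle here: the only points requiring care are that $S$ is nonempty, so that the minimum color is defined, and that the coloring uses only the values $0$ and $1$, which makes the two-way case analysis immediate. The argument is the exact dual of Claim~\ref{DBA-to-DPA}, with the even/odd roles of the two colors interchanged.
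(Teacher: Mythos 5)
Your proof is correct: the paper states this claim as a known observation without giving a proof, and your argument — that both acceptors share the run, that $\kappa_{\aut{C}}$ takes only the values $0$ and $1$ so $\aut{P}(w)$ is odd exactly when $\Inf{\M}(w) \cap F = \emptyset$, which is precisely the \cobuchi\ condition — is exactly the intended justification. The side remarks (nonemptiness of $\Inf{\M}(w)$ for a complete deterministic automaton, and the duality with Claim~\ref{DBA-to-DPA}) are accurate and complete the argument.
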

	
\subsection{Right congruences}

An equivalence relation $\sim$ on $\Sigma^*$ is a \emph{right congruence}
if $x\sim y$ implies $xv \sim yv$ for every $x,y,v\in\Sigma^*$.
The \emph{index} of $\sim$, denoted ${|\!\sim\!|}$
is the number of equivalence classes of $\sim$.

Given an automaton
$\M=\la \Sigma, Q, \initstate, \delta \ra$,
we can associate with it a right congruence as follows:
$x \sim_\M y$ iff $M$ reaches the same state when reading $x$ or $y$,
that is, $\M(x) = \M(y)$.
If all the states of $\M$ are reachable then
the index of $\sim_\M$ is exactly the number of states of $\M$.

Given a language $L\subseteq \Sigma^*$,
its \emph{canonical right congruence} $\sim_L$ is defined as follows: 
$x \sim_L y$ iff ${\forall z \in \Sigma^*}$ 
we have ${xz\in L} \iff {yz \in L}$. 
For a word $v\in\Sigma^*$,
the notation $[v]$ is used for the equivalence class
of $\sim$ in which $v$ resides.
	
With a right congruence $\sim$ of finite index
one can naturally associate an automaton
$\M_\sim=\la \Sigma, Q, \initstate, \delta \ra$ as follows.
The set of states $Q$ consists of the equivalence classes of $\sim$.
The initial state $\initstate$ is the equivalence class $[\varepsilon]$.
The transition function $\delta$ is defined by $\delta([u],a)=[ua]$.

The Myhill-Nerode Theorem states
that a language $L \subseteq \Sigma^*$ is regular
iff $\sim_L$ is of finite index.
Moreover, if $L$ is accepted by a DFA $\A$ with automaton $\M$,
then $\sim_\M$ refines $\sim_L$.
Finally, the index of $\sim_L$ gives
the number of states of the minimal DFA for $L$. 
	
For an $\omega$-language $L\subseteq \Sigma^\omega$,
the right congruence $\sim_L$ is defined analogously,
by quantifying over $\omega$-words.
That is, $x \sim_L y$ iff ${\forall z \in \Sigma^\omega}$
we have ${xz\in L} \iff {yz \in L}$.

For a regular $\omega$-language $L$, the right congruence relation
$\sim_L$ is always of finite index, and we may define the
\emph{right congruence automaton} for $L$ to be the automaton $M_{\sim_L}$.
However, in constrast to the Myhill-Nerode Theorem,
the right congruence automaton for $L$ may not be adequate
to support an acceptor for $L$.
As an example consider the language $L = (a+b)^*(bba)^\omega$.
We have that $\sim_{L}$ consists of just one equivalence class, 
since for any $x\in\Sigma^*$ and $w\in\Sigma^\omega$ we have
that $xw \in L$ iff $w$ has $(bba)^\omega$ as a suffix.
However, a DPA or DMA recognizing $L$ needs more than a single state.
	
The classes $\IB$, $\IC$, $\IP$, $\IM$ of omega languages are
defined as those for which the right congruence automaton
is adequate to support an acceptor of the corresponding type.
A language $L$ is in $\IB$ (resp., $\IC$, $\IP$, $\IM$)
if there exists a DBA (resp., DCA, DPA, DMA) $\aut{A}$ such that 
$L=\sema{\aut{A}}$ and the automaton part of $\aut{A}$ is isomorphic
to the right congruence automaton of $L$.
These classes are more expressive than one might conjecture;
it was shown in~\cite{AngluinF18} that
in every class of the infinite Wagner hierarchy~\cite{Wagner75}
there are languages in $\IP$. 

\subsection{The inclusion and equivalence problems}

The \emph{inclusion problem} for two $\omega$-acceptors is the following.
Given as input two $\omega$-acceptors  $\aut{A}_1$ and $\aut{A}_2$
over the same alphabet,
determine whether the language accepted by $\aut{A}_1$ is
a subset of the language accepted by $\aut{A}_2$, that is,
whether $\sema{\aut{A}_1} \subseteq \sema{\aut{A}_2}$.
If so, the answer should be ``yes''; if not, the answer should
be ``no'' and a \emph{witness}, that is, an ultimately periodic
$\omega$-word $u(v)^{\omega}$ accepted by $\aut{A}_1$ but
rejected by $\aut{A}_2$.

The \emph{equivalence problem} for two $\omega$-acceptors is similar: 
the input is two $\omega$-acceptors $\aut{A}_1$ and $\aut{A}_2$ 
over the same alphabet, and the problem is to determine 
whether they are equivalent,
that is, whether $\sema{\aut{A}_1} = \sema{\aut{A}_2}$.
If so, the answer should be ``yes''; if not, the answer should
be ``no'' and a witness, that is, an ultimately periodic
$\omega$-word $u(v)^{\omega}$ that is accepted by one of
the two acceptors and not accepted by the other.

Clearly, if we have a procedure to solve the inclusion problem,
at most two calls to it will solve the equivalence problem.
Thus, we focus on the inclusion problem.
By Claim~\ref{DBA-DCA-complement}, the inclusion and
equivalence problems for DCAs are efficiently reducible
to those for DBAs, and vice versa.
Also, by Claims~\ref{DBA-to-DPA} and \ref{DCA-to-DPA}, the inclusion
and equivalence problems for DBAs and DCAs are efficiently
reducible to those for DPAs.
Hence we consider the problems of inclusion for two DPAs and
inclusion for two DMAs.

Note that 
while a polynomial algorithm for testing inclusion of DFAs can be obtained
using polynomial algorithms for complementation, intersection and emptiness
(since for any two languages $L_1 \subseteq L_2$
if and only if $L_1 \cap \overline{L_2}=\emptyset$),
a similar approach does not work in the case of DPAs;
although complementation and emptiness for DPAs
can be computed in polynomial time, 
intersection cannot~\cite[Theorem 9]{Boker18}.

For the inclusion problem for DBAs, DCAs, and DPAs, 
Schewe~\cite{Schewe10} gives the following result.
\begin{theorem}
  \label{theorem:schewe}
  The inclusion problems for DBAs, DCAs, and DPAs are in NL.
\end{theorem}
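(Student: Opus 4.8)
The plan is to reduce all three problems to one and then reason about the complement. By Claims~\ref{DBA-to-DPA} and~\ref{DCA-to-DPA}, every DBA and every DCA can be converted to an equivalent DPA by relabelling each state with a color in $\{0,1,2\}$, a transformation computable in logarithmic space; applying it to each input acceptor reduces inclusion for any mixture of DBAs, DCAs and DPAs to inclusion for two DPAs $\aut{A}_1$ and $\aut{A}_2$. Since NL is closed under complement by the Immerman--Szelepcs\'enyi theorem, it suffices to place \emph{non}-inclusion of two DPAs in NL.

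First I would characterize non-inclusion combinatorially. We have $\sema{\aut{A}_1}\not\subseteq\sema{\aut{A}_2}$ iff some $w\in\Sigma^\omega$ is accepted by $\aut{A}_1$ and rejected by $\aut{A}_2$. Running the two deterministic acceptors in lockstep is exactly running the product $\aut{A}_1\times\aut{A}_2$, whose run on $w$ projects onto the runs of the two acceptors, so $\pi_i(\Inf{\aut{A}_1\times\aut{A}_2}(w))=\Inf{\aut{A}_i}(w)$; and, as noted earlier, $\Inf{\aut{A}_1\times\aut{A}_2}(w)$ is a SCC $S$ of the product. Hence non-inclusion holds iff the product has a reachable nonempty cycle whose set of states $S$ satisfies $\min(\kappa_1(\pi_1(S)))$ odd and $\min(\kappa_2(\pi_2(S)))$ even. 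Writing $c_1$ for that odd minimum and $c_2$ for that even minimum, this is equivalent to the existence of an odd $c_1$, an even $c_2$, and a reachable nonempty cycle all of whose states $(p_1,p_2)$ satisfy $\kappa_1(p_1)\ge c_1$ and $\kappa_2(p_2)\ge c_2$, with some state attaining $\kappa_1=c_1$ and some state attaining $\kappa_2=c_2$. The periodic part $v$ of the witness $u(v)^\omega$ is read off this cycle, and $u$ from any path reaching it.

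I would then turn this into an NL procedure. The machine guesses two product states $a$ and $b$, each a pair in $Q_1\times Q_2$ and so of logarithmic size, sets $c_1=\kappa_1(\pi_1(a))$ and $c_2=\kappa_2(\pi_2(b))$, and verifies that $c_1$ is odd and $c_2$ is even. Let $R$ be the set of product states $(p_1,p_2)$ with $\kappa_1(p_1)\ge c_1$ and $\kappa_2(p_2)\ge c_2$. The machine accepts iff three reachability tests succeed: $a$ is reachable from the initial product state; $a$ reaches $b$ through $R$; and $b$ reaches $a$ through $R$ by a nonempty path. Mutual reachability of $a$ and $b$ inside $R$ yields a nonempty cycle through both on which the minimum colors of the two coordinates are exactly $c_1$ and $c_2$, as required. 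Each test is ordinary graph reachability, and the product graph is never stored: at any moment the machine holds a single pair from $Q_1\times Q_2$ and moves along an edge by computing, for a chosen $\sigma\in\Sigma$, the successor $(\delta_1(\pi_1(\cdot),\sigma),\delta_2(\pi_2(\cdot),\sigma))$, testing membership in $R$ by comparing $\kappa_1$ and $\kappa_2$ against $c_1,c_2$. All of this uses logarithmic work space, so non-inclusion is in NL, whence inclusion is in NL as well.

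The step to get right is the equivalence claimed in the second paragraph: that a lasso meeting the stated color bounds exists exactly when non-inclusion holds. The forward direction rests on the observation that any $S$ witnessing non-inclusion already lies in $R$ once $c_1,c_2$ are the two actual minima, so restricting to $R$ discards no witness; the reverse direction needs the guessed cycle to be genuinely nonempty, which forces separate treatment of the corner case $a=b$ (guess one outgoing $R$-edge of $a$ first, then check a return to $a$ within $R$). Everything else---the reachability routines and the space bookkeeping---is routine once this combinatorial core is in place.
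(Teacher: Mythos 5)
Your proposal is correct and follows essentially the same route as the paper: both reduce DBA/DCA to DPA via the two-color translations, reduce inclusion to non-inclusion of two DPAs (using NL $=$ coNL, which you make explicit and the paper leaves implicit), and then guess a reachable lasso in the product whose minimum colors have the required parities. The only difference is in the logspace verification: the paper guesses the loop word symbol-by-symbol and tracks the running minima of $\kappa_1$ and $\kappa_2$, whereas you fix the minima up front via two guessed anchor states and do reachability in the color-restricted subgraph --- a minor implementation variant of the same argument.
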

Because NL (nondeterministic logarithmic space) is contained 
in polynomial time,
this implies the existence of polynomial time inclusion and equivalence
algorithms for DBAs, DCAs, and DPAs.
Because Schewe does not give a proof or reference 
for Theorem~\ref{theorem:schewe},
for completeness we include a brief proof.
\begin{proof}
  For $i = 1,2$, let
  $\aut{P}_i = \la \Sigma, Q_i, (q_\iota)_i, \delta_i, \kappa_i \ra$ be a DPA.
  It suffices to guess two states $q_1 \in Q_1$ and $q_2 \in Q_2$,
  and two words $u \in \Sigma^*$ and $v \in \Sigma^+$,
  and to check that for $i = 1,2$, $\delta_i((q_\iota)_i,u) = q_i$
  and $\delta_i(q_i,v) = q_i$, and also,
  that the smallest value of $\kappa_1(q)$ in the loop 
  in $\aut{P}_1$ from $q_1$ to $q_1$ on input
  $v$ is odd,
  while the smallest value of $\kappa_2(q)$ in the loop 
  in $\aut{P}_2$ from $q_2$ to $q_2$ on input 
  $v$ is even.  
  Logarithmic space is enough to record the two
  guessed states $q_1$ and $q_2$ as well as 
  the current minimum values of $\kappa_1$ and $\kappa_2$ 
  as the loops on $v$ are traversed in the two automata.
  The words $u$ and $v$ need only be guessed symbol-by-symbol, 
  using a pointer in each automaton to keep track of its current state.
\end{proof}

This approach does not seem to work in the case of testing DMA inclusion,
because the acceptance criterion for DMAs would require keeping track of
the set of states traversed in the loop on the word $v$, which would
in general require more than logarithmic space.
In this paper, we give an explicit polynomial time algorithm for testing
DPA inclusion, as well as a polynomial time algorithm for testing
DMA inclusion, a novel result.

\section{An inclusion algorithm for DPAs}

In this section we describe an explicit polynomial time algorithm 
for the inclusion problem for two DPAs.

\subsection{Constructing a witness}

In order to return a witness,
the inclusion algorithm calls a procedure \proc{Witness}
that takes as input an automaton $\M = \la \Sigma, Q, q_\iota, \delta \ra$
with no unreachable states and 
a SCC $C$ of $\M$ and returns
an ultimately periodic word $u(v)^\omega$ such that
$\Inf{\M}(u(v)^\omega) = C$.
The procedure first chooses a state $q \in C$ and uses breadth
first search in $G(\M)$ to find a shortest finite word $u$ such
that $\M(u) = q$.
The length of $u$ is at most $|Q|-1$.

If $|C| = 1$, then the procedure finds a symbol $\sigma \in \Sigma$
such that $\delta(q,\sigma) = q$, and returns $u(\sigma)^\omega$.
Otherwise, for every $q' \in C$ such that $q' \neq q$, it uses
breadth first search in the subgraph of $G(\M)$ induced by
the vertices in $C$ to find shortest finite words $v_{q,q'}$ and
$v_{q',q}$ such that $\delta(q,v_{q,q'}) = q'$ and $\delta(q',v_{q',q}) = q$,
and no intermediate state is outside of $C$.
The finite word $v$ is the concatenation, 
over all $q' \in C$ with $q' \neq q$
of the finite words $v_{q,q'} \cdot v_{q',q}$.
The length of $v$ is at most $O(|C|^2)$.

In $\M$, the input $u$ reaches $q$,
and from $q$ the word $v$ visits no state outside
of $C$, visits each state of $C$, and returns to $q$.
Thus, $\Inf{\M}(u(v)^\omega) = C$, as desired.
The length of $u(v)^\omega$ is bounded by $O(|Q|+|C|^2)$.
We have proved the following.

\begin{lemma}
  \label{lem:C-to-witness}
  The \proc{Witness} procedure takes as input
  an automaton $\M$ with no unreachable states
  and a SCC $C$ of $\M$, and returns 
  an ultimately periodic word $u(v)^\omega$
  such that $\Inf{\M}(u(v)^\omega) = C$.

  The length of $u(v)^\omega$ is bounded by $O(|Q|+|C|^2)$, 
  where $Q$ is the set of states of $\M$.
  The running time of the \proc{Witness} procedure is
  bounded by this quantity plus $O(|Q|\cdot|\Sigma|)$.
\end{lemma}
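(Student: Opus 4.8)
The plan is to verify, in order, the three assertions of the lemma: that the returned word $u(v)^\omega$ satisfies $\Inf{\M}(u(v)^\omega) = C$, that its description length is $O(|Q| + |C|^2)$, and that the procedure runs within the stated time bound.

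First I would establish the preliminary facts used by the construction. Since $\M$ has no unreachable states, the chosen state $q \in C$ is reachable from \initstate, so breadth-first search in $G(\M)$ returns a shortest word $u$ with $\M(u) = q$; as the BFS tree has depth at most $|Q|-1$, we get $|u| \le |Q|-1$. I then split into the two cases the procedure distinguishes. If $|C| = 1$, the definition of an SCC in the omega-automata sense guarantees a symbol $\sigma$ with $\delta(q,\sigma) = q$, so the self-loop used by the procedure exists. If $|C| > 1$, the SCC definition gives, for every $q' \in C$, a nonempty word whose every prefix keeps the run inside $C$ and that takes $q$ to $q'$ (and symmetrically $q'$ to $q$); these are exactly the paths that BFS in the subgraph of $G(\M)$ induced by $C$ finds as $v_{q,q'}$ and $v_{q',q}$, so $v$ is well defined.

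The heart of the argument is correctness, that is, $\Inf{\M}(u(v)^\omega) = C$. In the singleton case the run on $u(\sigma)^\omega$ eventually stays at $q$, so $\Inf{\M}(u(\sigma)^\omega) = \{q\} = C$. In the general case I would observe that reading $u$ places the run at $q$, and that each traversal of $v$ starts and ends at $q$, visits every state of $C$ (since for each $q' \ne q$ the factor $v_{q,q'}$ reaches $q'$), and visits no state outside $C$ (since every prefix of every factor stays inside $C$). Consequently, along $u(v)^\omega$ every state of $C$ recurs in each copy of $v$ and is therefore visited infinitely often, while any state outside $C$ can occur only within the finite prefix $u$ and hence finitely often; this gives the two inclusions $C \subseteq \Inf{\M}(u(v)^\omega)$ and $\Inf{\M}(u(v)^\omega) \subseteq C$.

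Finally I would collect the quantitative bounds. The factor $v$ is a concatenation of at most $|C|-1$ pairs $v_{q,q'} \cdot v_{q',q}$, each factor a shortest path in an induced subgraph on $|C|$ vertices and thus of length at most $|C|-1$; hence $|v| = O(|C|^2)$, and with $|u| \le |Q|-1$ the description length of $u(v)^\omega$ is $O(|Q| + |C|^2)$. For the running time, building $G(\M)$ and the BFS for $u$ cost $O(|Q|\cdot|\Sigma|)$, the induced-subgraph searches producing all the $v_{q,q'}$ and $v_{q',q}$ cost $O(|C|^2)$, and emitting the output costs time proportional to its length; summing gives the claimed bound of $O(|Q|+|C|^2) + O(|Q|\cdot|\Sigma|)$. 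I expect the only genuinely delicate point to be the correctness verification---specifically, justifying from the omega-automata SCC definition that the search can be confined to $C$ while still reaching each state, so that $v$ simultaneously covers all of $C$ and introduces no spurious recurrent state outside $C$; the length and time bounds are then routine.
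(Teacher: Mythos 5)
Your proposal is correct and follows essentially the same route as the paper: BFS for a shortest $u$ reaching $q\in C$, the singleton self-loop case, shortest in-$C$ paths $v_{q,q'}$ and $v_{q',q}$ concatenated into $v$, and the observation that each traversal of $v$ covers all of $C$ while never leaving it, giving both inclusions for $\Inf{\M}(u(v)^\omega)=C$. The paper states these steps more tersely (it asserts the correctness claim in one sentence), so your explicit appeal to the omega-automata SCC definition for the existence of the self-loop and of the in-$C$ connecting words, and your separate verification of the two inclusions, are welcome elaborations rather than deviations.
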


\subsection{Searching for $w$ with $\aut{P}_1(w) = k_1$ 
and $\aut{P}_2(w) = k_2$}

We next describe a procedure \proc{Colors} that takes as input
two DPAs over the same alphabet,
${\aut{P}}_i = \la \Sigma, Q_i,(q_\iota)_i ,\delta_i, \kappa_i \ra$
for $i = 1,2$,
and two nonnegative integers $k_1$ and $k_2$,
and answers the question of whether there exists an $\omega$-word $w$
such that for $i = 1,2$,
$\aut{P}_i(w) = k_i$,
that is, the minimum color of the states
visited infinitely often on input $w$
in $\aut{P}_1$ is $k_1$ and in $\aut{P}_2$ is $k_2$.
If there is no such $w$, the return value will be ``no'', but
if there is such a $w$, the return value will be
``yes'' and an ultimately periodic witness $u(v)^{\omega}$
such that $\aut{P}_1(u(v)^{\omega}) = k_1$
and $\aut{P}_2(u(v)^{\omega}) = k_2$.

For $i =1,2$, let $\M_i$ be the automaton of $\aut{P}_i$,
that is,
$\M_i = \la \Sigma, Q_i,(q_\iota)_i ,\delta_i \ra$.
The \proc{Colors} procedure constructs
the product automaton $\M = {\M}_1 \times {\M}_2$,
eliminating unreachable states.
It then constructs the related directed graph $G(\M)$,
and the subgraph $G'$ of $G(\M)$ obtained 
by removing all vertices $(q_1,q_2)$ 
such that $\kappa_1(q_1) < k_1$ or $\kappa_2(q_2) < k_2$
and their incident edges.

In linear time in the size of $G'$, 
the \proc{Colors} procedure computes the graph theoretic
strongly connected components of $G'$ and eliminates any
trivial strong components.
The procedure then loops through 
the nontrivial strong components $C$ of $G'$ 
checking whether
$\min(\kappa_1(\pi_1(C))) = k_1$ and
$\min(\kappa_2(\pi_2(C))) = k_2$.
If so, the \proc{Witness} procedure is called with
inputs $\M$ and $C$, and the resulting witness $u(v)^{\omega}$
is returned with the answer ``yes''.
If none of the nontrivial strong components of $G'$ satisfies
this condition, then the answer ``no'' is returned.

\begin{theorem}
  \label{thm:dpa-k1-k2}
  The \proc{Colors} procedure takes as input two arbitrary
  DPAs $\aut{P}_1$ and $\aut{P}_2$ over the same alphabet 
  $\Sigma$
  and
  two nonnegative integers $k_1$ and $k_2$, and
  determines whether there exists an $\omega$-word $w$ such that
  for $i = 1,2$, $\aut{P}_i(w) = k_i$, returning the
  answer ``no'' if not, and returning the answer ``yes'' and a
  witness word $w = u(v)^{\omega}$ such that $P_i(u(v)^\omega) = k_i$
  for $i = 1,2$, if so.

  The length of a witness $u(v)^\omega$ is
  bounded by $O((|Q_1|\cdot|Q_2|)^2)$ 
  and the running time of the \proc{Colors} procedure
  is bounded by this quantity plus $O(|\Sigma|\cdot|Q_1|\cdot|Q_2|)$,
  where $Q_i$ is the set of states of $\aut{P}_i$ for $i = 1,2$.
\end{theorem}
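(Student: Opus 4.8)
The plan is to prove correctness of \proc{Colors} by separately arguing soundness and completeness, and then to account for the length and running-time bounds. The key fact I would record first is a projection identity for the product: the run of $\M = \M_1 \times \M_2$ on any $\omega$-word $w$ is the coordinate-wise pairing of the runs of $\M_1$ and $\M_2$, and consequently $\pi_1(\Inf{\M}(w)) = \Inf{\M_1}(w)$ and $\pi_2(\Inf{\M}(w)) = \Inf{\M_2}(w)$. The nontrivial inclusion $\Inf{\M_i}(w) \subseteq \pi_i(\Inf{\M}(w))$ is a pigeonhole argument: if a state recurs infinitely often in coordinate $i$, then since the other coordinate ranges over a finite set, some pair containing it recurs infinitely often. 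Hence $\aut{P}_i(w) = \min(\kappa_i(\pi_i(\Inf{\M}(w))))$ for $i = 1,2$, which is exactly the quantity the procedure evaluates on strong components.

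For soundness, suppose \proc{Colors} answers ``yes'' because some nontrivial strong component $C$ of $G'$ satisfies $\min(\kappa_1(\pi_1(C))) = k_1$ and $\min(\kappa_2(\pi_2(C))) = k_2$. Since $G'$ is a subgraph of $G(\M)$ obtained only by deleting vertices, $C$ is strongly connected in $G(\M)$, and being nontrivial it carries an internal cycle; thus $C$ is a SCC of the (reachable) automaton $\M$, so \proc{Witness} applies and, by Lemma~\ref{lem:C-to-witness}, returns $u(v)^\omega$ with $\Inf{\M}(u(v)^\omega) = C$. The projection identity then gives $\aut{P}_i(u(v)^\omega) = \min(\kappa_i(\pi_i(C))) = k_i$ for $i = 1,2$, so the reported witness is correct.

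For completeness, suppose some $w$ satisfies $\aut{P}_i(w) = k_i$ for $i = 1,2$, and set $D = \Inf{\M}(w)$, which is a SCC of $\M$ by the Claim that $\Inf{\M}(\cdot)$ is always a SCC. From $\min(\kappa_1(\pi_1(D))) = k_1$ and $\min(\kappa_2(\pi_2(D))) = k_2$ we learn that every state of $D$ has first color $\ge k_1$ and second color $\ge k_2$, so no vertex of $D$ is deleted; moreover its states lie on a run from the initial state, so they survive the elimination of unreachable states, and $D \subseteq G'$. Because $D$ is a SCC of $\M$, its internal paths use only edges between surviving vertices and hence are paths of $G'$, so $D$ is strongly connected in $G'$ and its internal cycle makes it lie inside a nontrivial strong component $C$ of $G'$. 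The crux is that $C$ itself passes the test, via a two-sided squeeze: on one hand every vertex of $C$ lies in $G'$, forcing $\min(\kappa_i(\pi_i(C))) \ge k_i$; on the other hand $D \subseteq C$ gives $\pi_i(D) \subseteq \pi_i(C)$ and hence $\min(\kappa_i(\pi_i(C))) \le k_i$. Thus equality holds in both coordinates, the loop encounters a qualifying component, and the answer is ``yes''. I expect this squeeze to be the main obstacle, since the test is applied to the maximal component $C$ rather than to $D$, and it is a priori unclear that enlarging $D$ to $C$ leaves both minimum colors intact; the argument hinges precisely on the vertex deletion supplying the lower bound and the inclusion $D \subseteq C$ supplying the upper bound.

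For the bounds, the product $\M$ has at most $|Q_1| \cdot |Q_2|$ states and is built, with unreachable states removed and the edges of $G(\M)$ formed, in time $O(|\Sigma| \cdot |Q_1| \cdot |Q_2|)$; forming $G'$ and computing its strong components by Tarjan's algorithm is linear in the size of $G'$, hence again $O(|\Sigma| \cdot |Q_1| \cdot |Q_2|)$, and scanning components to evaluate the color minima is linear overall. At most one call to \proc{Witness} is made, on a component with $|C| \le |Q_1| \cdot |Q_2|$ inside an automaton of at most $|Q_1| \cdot |Q_2|$ states; by Lemma~\ref{lem:C-to-witness} this yields a witness of length $O((|Q_1| \cdot |Q_2|)^2)$ in time bounded by that length plus $O(|\Sigma| \cdot |Q_1| \cdot |Q_2|)$. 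Summing gives the stated witness length $O((|Q_1| \cdot |Q_2|)^2)$ and running time equal to this quantity plus $O(|\Sigma| \cdot |Q_1| \cdot |Q_2|)$.
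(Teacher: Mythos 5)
Your proof is correct and takes essentially the same route as the paper's: soundness by applying the projection identity $\pi_i(\Inf{\M}(u(v)^\omega)) = \Inf{\M_i}(u(v)^\omega)$ to the component found, and completeness by locating $\Inf{\M}(w)$ inside a nontrivial strong component $C$ of $G'$ and deriving $\min(\kappa_i(\pi_i(C))) = k_i$ from the vertex deletion (lower bound) together with the containment (upper bound). The only difference is presentational: you spell out the projection identity and the two-sided squeeze that the paper leaves terse.
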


\begin{proof}
Assume first that the \proc{Colors} procedure returns ``yes'' 
with a witness $u(v)^\omega$.
This occurs only if the procedure finds a nontrivial graph theoretic strong
component $C$ of $G'$ such that 
$\min(\kappa_1(\pi_1(C))) = k_1$ and
$\min(\kappa_2(\pi_2(C))) = k_2$.
For $i = 1,2$,
$\pi_i(C)$ is the set of states visited infinitely often by ${\M}_i$ on
the input $u(v)^{\omega}$, which has minimum color $k_i$.
That is, for $i = 1,2$, $\aut{P}_i(u(v)^{\omega}) = k_i$,
and $u(v)^{\omega}$ is a correct witness for the answer ``yes''.

To see that the \proc{Colors} procedure 
does not incorrectly answer ``no'', we argue as follows.
Suppose $w$ is an $\omega$-word such that for $i = 1,2$,
$\aut{P}_i(w) = k_i$, that is,
if $C_i = \Inf{{\M}_i}(w)$ then $\min(\kappa_i(C_i)) = k_i$.
Clearly, no state in $C_i$ has a color less than $k_i$, so if
$C = \Inf{\M}(w)$ is the set of states visited infinitely 
often in $\M$ on input $w$, all the elements of $C$ will be in $G'$.
Because $C$ is a SCC of $\M$,
$C$ is a nontrivial graph theoretic strongly connected 
set of vertices of $G'$.
Thus $C$ is contained in 
a graph theoretic nontrivial (maximal) strong component $C'$ of $G'$,
and because there are no vertices $(q_1,q_2)$ in $G'$
with $\kappa_1(q_1) < k_1$ or $\kappa_2(q_2) < k_2$, we must have
$\min(\kappa_i(\pi_i(C'))) = k_i$ for $i = 1,2$.
Thus, the algorithm will find at least one such graph theoretic strong
component $C'$ of $G'$ and return ``yes'' and a correct witness.

The running time of the \proc{Colors} procedure, exclusive of a
call to the \proc{Witness} procedure, is linear in the
size of $\M$, that is, $O(|\Sigma| \cdot |Q_1| \cdot |Q_2|)$.
Because the number of states of $\M$ is bounded by $|Q_1|\cdot|Q_2|$,
this is also an upper bound on the size of any nontrivial
graph theoretic strong component of $G(\M)$, so the the
length of any witness $u(v)^\omega$ returned is bounded
by $O((|Q_1|\cdot|Q_2|)^2)$ and the overall running time is
bounded by this quantity plus $O(|\Sigma|\cdot|Q_1|\cdot|Q_2|)$.
\end{proof}

\subsection{Inclusion and equivalence algorithms for DPAs}

The inclusion problem for DPAs $\aut{P}_1$ and $\aut{P}_2$ 
over the same alphabet can be solved by
looping over all odd $k_1$ in the range of $\kappa_1$ and
all even $k_2$ in the range of $\kappa_2$,
calling the \proc{Colors} procedure with inputs
$\aut{P}_1$, $\aut{P}_2$, $k_1$, and $k_2$.
If the \proc{Colors} procedure returns 
any witness $u(v)^\omega$, then
$u(v)^{\omega} \in \sema{\aut{P}_1} \setminus \sema{\aut{P}_2}$,
and $u(v)^\omega$ is returned as a witness of non-inclusion.
Otherwise, by Theorem~\ref{thm:dpa-k1-k2}, there is no $\omega$-word $w$
accepted by $\aut{P}_1$ and not accepted by $\aut{P}_2$, and
the answer ``yes'' is returned for the inclusion problem.
Note that for $i = 1,2$,
the range of $\kappa_i$ has at most $|Q_i|$ distinct elements.

\begin{corollary}
  \label{cor:DPA-inclusion-equivalence}
  There are algorithms for the inclusion and equivalence problems
  for two DPAs $\aut{P}_1$ and $\aut{P}_2$ over the same alphabet $\Sigma$
  that run in time bounded by 
  $O((|Q_1|\cdot|Q_2|)^3 + |\Sigma|(|Q_1|\cdot|Q_2|)^2)$,
  where $Q_i$ is the set of states of $\aut{P}_i$ for $i = 1,2$.
  A returned witness $u(v)^\omega$ has length $O((|Q_1|\cdot|Q_2|)^2)$.
\end{corollary}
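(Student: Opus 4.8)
The plan is to reduce the inclusion test to repeated calls of the \proc{Colors} procedure, exploiting the parity acceptance semantics. The starting observation is that for a DPA $\aut{P}$, a word $w$ is accepted precisely when $\aut{P}(w)$ is odd. Hence $w \in \sema{\aut{P}_1} \setminus \sema{\aut{P}_2}$ exactly when $\aut{P}_1(w)$ is odd and $\aut{P}_2(w)$ is even. Writing $k_1 = \aut{P}_1(w)$ and $k_2 = \aut{P}_2(w)$, a witness of non-inclusion exists if and only if there is some odd $k_1$ and some even $k_2$ for which some $\omega$-word $w$ realizes $\aut{P}_i(w) = k_i$ for $i = 1,2$. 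Since $\aut{P}_i(w)$ is always the minimum color over a nonempty set of states, it necessarily lies in the range of $\kappa_i$; therefore it suffices to loop over the odd values $k_1$ occurring in the range of $\kappa_1$ and the even values $k_2$ in the range of $\kappa_2$, calling \proc{Colors} with inputs $\aut{P}_1, \aut{P}_2, k_1, k_2$ for each such pair.

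For correctness I would invoke Theorem~\ref{thm:dpa-k1-k2} directly. If some call returns ``yes'' with a witness $u(v)^\omega$, then $\aut{P}_1(u(v)^\omega) = k_1$ is odd and $\aut{P}_2(u(v)^\omega) = k_2$ is even, so $u(v)^\omega \in \sema{\aut{P}_1}$ but $u(v)^\omega \notin \sema{\aut{P}_2}$, a valid non-inclusion witness, and the algorithm returns ``no'' with this word. Conversely, if no call produces a witness, then by the guarantee of Theorem~\ref{thm:dpa-k1-k2} that \proc{Colors} never incorrectly answers ``no'', there is no $\omega$-word $w$ with $\aut{P}_1(w)$ odd and $\aut{P}_2(w)$ even, i.e.\ no word accepted by $\aut{P}_1$ and rejected by $\aut{P}_2$; the algorithm soundly reports ``yes'' for inclusion.

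For the complexity bound I would simply count the calls. The range of $\kappa_i$ has at most $|Q_i|$ distinct values, so there are at most $|Q_1|$ choices of $k_1$ and $|Q_2|$ choices of $k_2$, giving at most $|Q_1| \cdot |Q_2|$ calls to \proc{Colors}. By Theorem~\ref{thm:dpa-k1-k2} each call costs $O((|Q_1|\cdot|Q_2|)^2 + |\Sigma| \cdot |Q_1| \cdot |Q_2|)$, and multiplying by the number of calls yields the claimed bound $O((|Q_1|\cdot|Q_2|)^3 + |\Sigma|(|Q_1|\cdot|Q_2|)^2)$. A returned witness is the output of a single \proc{Colors} call, whose length is $O((|Q_1|\cdot|Q_2|)^2)$ by the same theorem, so the witness-length claim is immediate. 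Equivalence then follows by running the inclusion test in both directions: $\sema{\aut{P}_1} = \sema{\aut{P}_2}$ iff both $\sema{\aut{P}_1} \subseteq \sema{\aut{P}_2}$ and $\sema{\aut{P}_2} \subseteq \sema{\aut{P}_1}$ hold, and a witness for a failing direction is an $\omega$-word accepted by exactly one of the two acceptors; this at most doubles the work and leaves the asymptotic bounds unchanged.

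I expect no substantial obstacle, since the theorem does the real work. The only points requiring care are the semantic equivalence linking acceptance to the parity of $\aut{P}_i(w)$, and the observation that every realizable value $\aut{P}_i(w)$ lies in the range of $\kappa_i$ — this is what both justifies restricting the loop to the ranges and keeps the number of iterations bounded by $|Q_1| \cdot |Q_2|$, which is in turn what makes the per-call cost of \proc{Colors} aggregate to the stated polynomial.
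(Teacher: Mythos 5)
Your proposal is correct and follows essentially the same route as the paper: loop over odd $k_1$ in the range of $\kappa_1$ and even $k_2$ in the range of $\kappa_2$, invoke \proc{Colors} for each pair, and aggregate the at most $|Q_1|\cdot|Q_2|$ per-call costs from Theorem~\ref{thm:dpa-k1-k2} to obtain the stated bound, with equivalence reduced to two inclusion tests. No gaps.
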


From Claims~\ref{DBA-to-DPA} and \ref{DCA-to-DPA}, and the fact that
two colors suffice in the transformation of a DBA (or DCA) to a DPA, 
we have the following.

\begin{corollary}
  \label{cor:DBA-DCA-inclusion-equivalence}
  There are algorithms for the inclusion and equivalence problems
  for two DBAs (or DCAs) $\aut{B}_1$ and $\aut{B}_2$ over the same
  alphabet $\Sigma$ that run in time bounded by
  $O((|Q_1|\cdot|Q_2|)^2 + |\Sigma|\cdot|Q_1|\cdot|Q_2|)$, where
  $Q_i$ is the set of states of $\aut{B}_i$ for $i = 1,2$.
  A returned witness $u(v)^\omega$ has length $O((|Q_1|\cdot|Q_2|)^2)$.
\end{corollary}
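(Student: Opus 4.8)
The plan is to reduce the DBA (and DCA) inclusion and equivalence problems to the corresponding DPA problems already solved in Corollary~\ref{cor:DPA-inclusion-equivalence}, exploiting the fact that the reduction produces DPAs using only \emph{two} colors. This collapses the outer loop of the DPA inclusion algorithm to a single iteration, which is precisely what lowers the running time from cubic to quadratic.

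First I would apply Claim~\ref{DBA-to-DPA} to each input DBA $\aut{B}_i$, obtaining an equivalent DPA $\aut{P}_i = \la \Sigma, Q_i, (q_\iota)_i, \delta_i, \kappa_{\aut{B}_i} \ra$, where $\kappa_{\aut{B}_i}$ assigns color $1$ to the states in $F$ and color $2$ to the rest. Claim~\ref{DBA-to-DPA} guarantees $\sema{\aut{B}_i} = \sema{\aut{P}_i}$ and leaves the state set unchanged, so an inclusion or equivalence test for $\aut{P}_1$ and $\aut{P}_2$ answers the same question for $\aut{B}_1$ and $\aut{B}_2$, with the same values of $|Q_i|$ and $|\Sigma|$. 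The crucial observation is that the range of each $\kappa_{\aut{B}_i}$ is contained in $\{1,2\}$. The DPA inclusion algorithm of the previous subsection loops over all odd $k_1$ in the range of $\kappa_1$ and all even $k_2$ in the range of $\kappa_2$; here the only odd value is $k_1 = 1$ and the only even value is $k_2 = 2$, so a \emph{single} call to \proc{Colors}, with $k_1 = 1$ and $k_2 = 2$, suffices. By Theorem~\ref{thm:dpa-k1-k2}, this one call runs in time $O((|Q_1|\cdot|Q_2|)^2 + |\Sigma|\cdot|Q_1|\cdot|Q_2|)$ and, when it answers ``yes'', returns a witness $u(v)^\omega$ of length $O((|Q_1|\cdot|Q_2|)^2)$ that lies in $\sema{\aut{P}_1} \setminus \sema{\aut{P}_2} = \sema{\aut{B}_1}\setminus\sema{\aut{B}_2}$, establishing both the time bound and the witness-length bound stated in the corollary. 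For equivalence I would run two such inclusion tests, one in each direction, which does not change the asymptotic cost.

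The DCA case is handled identically, using Claim~\ref{DCA-to-DPA} in place of Claim~\ref{DBA-to-DPA}; there the resulting coloring uses only the colors $0$ and $1$, so the unique relevant pair is $k_1 = 1$, $k_2 = 0$, and again exactly one call to \proc{Colors} is needed. There is no genuine obstacle to overcome, since all the algorithmic content resides in Theorem~\ref{thm:dpa-k1-k2} and Corollary~\ref{cor:DPA-inclusion-equivalence}; the only point demanding care is to confirm that the two-color reduction really does reduce the number of \proc{Colors} calls from the $O(|Q_1|\cdot|Q_2|)$ of the general DPA case (the product of the counts of odd and even colors) down to a single call, since it is that factor which accounts for the drop of one power of $|Q_1|\cdot|Q_2|$ in the running time.
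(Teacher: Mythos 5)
Your proposal is correct and follows exactly the paper's (very brief) justification: the paper derives this corollary in one sentence from Claims~\ref{DBA-to-DPA} and \ref{DCA-to-DPA} together with the observation that two colors suffice, so the odd/even loop over color pairs in the DPA inclusion algorithm collapses to a single call to \proc{Colors}, whose cost is given by Theorem~\ref{thm:dpa-k1-k2}. Your write-up simply makes explicit the same reasoning, including the single relevant pair $(k_1,k_2)$ in each of the DBA and DCA cases.
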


\section{Inclusion and equivalence algorithms for DMAs}

In this section we develop a polynomial time algorithm
to solve the inclusion problem for two DMAs over the same alphabet.
The proof proceeds in two parts: (1) a polynomial time reduction
of the inclusion problem for two DMAs to the inclusion problem
for a DBA and a DMA, and (2) a polynomial time algorithm for
the inclusion problem for a DBA and a DMA.

\subsection{Reduction of DMA inclusion to DBA/DMA inclusion}

We first reduce the problem of inclusion for two arbitrary DMAs
to the inclusion problem for two DMAs 
where the first one has just a single final state set.
For $i = 1,2$, define the DMA
$\aut{U}_i = \la Q_i, \Sigma, (q_\iota)_i, \delta_i, \mathcal{F}_i \ra$,
where $\aut{F}_i$ is the set of final state sets for $\aut{U}_i$.
Let the elements of $\mathcal{F}_1$ be $\{F_1, \ldots, F_k\}$,
and for each $j \in [1..k]$, let
\[\aut{U}_{1,j} = \la Q_1, \Sigma, (q_\iota)_1, \delta_1, \{F_j\} \ra,\]
that is, 
$\aut{U}_{1,j}$ is $\aut{U}_1$ with $F_j$ as its only final state set.
Then by the definition of DMA acceptance,
\[\sema{\aut{U}_1} = \bigcup_{j=1}^k \sema{\aut{U}_{1,j}},\]
which implies that to test whether
$\sema{\aut{U}_1} \subseteq \sema{\aut{U}_2}$,
it suffices to test for all $j \in [1..k]$ that
$\sema{\aut{U}_{1,j}} \subseteq \sema{\aut{U}_2}$.

\begin{myclaim}
  \label{claim:arbitrary-to-one-final-state-set}
  Suppose $A$ is a procedure that solves the
  inclusion problem for two DMAs over the same alphabet,
  assuming that the first DMA has a single final state set.
  Then there is an algorithm that solves the inclusion
  problem for two arbitrary DMAs over the same alphabet,
  say $\aut{U}_1$ and $\aut{U}_2$,
  which simply makes $|\mathcal{F}_1|$ calls to $A$, where $\mathcal{F}_1$
  is the family of final state sets of $\aut{U}_1$.
\end{myclaim}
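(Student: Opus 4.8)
The plan is to reduce directly via the union decomposition already recorded above the claim. Each automaton $\aut{U}_{1,j} = \la Q_1, \Sigma, (q_\iota)_1, \delta_1, \{F_j\} \ra$ shares the automaton part of $\aut{U}_1$ and differs only in carrying the single final state set $\{F_j\}$, so the procedure $A$ is applicable to the pair $(\aut{U}_{1,j}, \aut{U}_2)$. The algorithm I would give simply constructs $\aut{U}_{1,1}, \ldots, \aut{U}_{1,k}$, where $k = |\mathcal{F}_1|$, and calls $A$ on each pair $(\aut{U}_{1,j}, \aut{U}_2)$, for a total of $|\mathcal{F}_1|$ calls.

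For the correctness of a ``yes'' answer I would invoke the elementary fact that $\bigcup_{j=1}^k S_j \subseteq T$ holds if and only if $S_j \subseteq T$ for every $j$. Instantiating this with $S_j = \sema{\aut{U}_{1,j}}$ and $T = \sema{\aut{U}_2}$ and using the identity $\sema{\aut{U}_1} = \bigcup_{j=1}^k \sema{\aut{U}_{1,j}}$, we get that $\sema{\aut{U}_1} \subseteq \sema{\aut{U}_2}$ iff each call $A(\aut{U}_{1,j}, \aut{U}_2)$ reports inclusion. Hence if all $k$ calls answer ``yes'', the algorithm safely answers ``yes''.

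For the ``no'' case and the witness I would argue that witnesses propagate upward along the union. Suppose some call $A(\aut{U}_{1,j}, \aut{U}_2)$ returns ``no'' together with an ultimately periodic word $u(v)^\omega \in \sema{\aut{U}_{1,j}} \setminus \sema{\aut{U}_2}$. Since $\sema{\aut{U}_{1,j}} \subseteq \sema{\aut{U}_1}$ by the decomposition, we have $u(v)^\omega \in \sema{\aut{U}_1} \setminus \sema{\aut{U}_2}$, so $u(v)^\omega$ is already a valid witness of non-inclusion for the original pair, and the algorithm can return it immediately without examining the remaining final state sets.

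The only point to verify — rather than a genuine obstacle — is that each $\aut{U}_{1,j}$ indeed has a single final state set, so that $A$ may legitimately be called on it, and that the decomposition $\sema{\aut{U}_1} = \bigcup_{j=1}^k \sema{\aut{U}_{1,j}}$ holds. The former is immediate from the construction, and the latter is exactly the identity established above the claim from the definition of Muller acceptance. No further work is required, and the reduction is plainly polynomial, since it performs $|\mathcal{F}_1|$ calls to $A$ on inputs whose sizes are bounded by the size of the original instance.
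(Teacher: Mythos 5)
Your proof is correct and follows exactly the paper's argument: it uses the decomposition $\sema{\aut{U}_1} = \bigcup_{j=1}^k \sema{\aut{U}_{1,j}}$ established just above the claim, reduces inclusion of a union to inclusion of each member, and correctly observes that a witness for any $\aut{U}_{1,j}$ is already a witness for $\aut{U}_1$. No issues.
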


Next we describe a procedure \proc{SCC-to-DBA} that
takes as inputs an automaton $\M$, a SCC $F$ of $\M$,
and a state $q \in F$, and returns
a DBA $B(\M,F,q)$ that accepts exactly $L(\M,F,q)$,
where $L(\M,F,q)$ is the set of
$\omega$-words $w$ that visit only the
states of $F$ when processed by $\M$ starting at state $q$,
and visits each of them infinitely many times.

Assume the states in $F$ are $\{q_0, q_1, \ldots, q_{m-1}\}$,
where $q_0 = q$.
The DBA $B(\M, F, q)$ is $\la Q', \Sigma, q_0, \delta', \{q_0\} \ra$,
where we define $Q'$ and $\delta'$ as follows.
We create new states $r_{i,j}$ for $i,j \in [0..m-1]$ such that
$i \neq j$, and denote the set of these by $R$.
We also create a new dead state $d_0$.
Then the set of states $Q'$ is $Q \cup R \cup \{d_0\}$.

For $\delta'$,
the dead state $d_0$ behaves as expected:
for all $\sigma \in \Sigma$, $\delta'(d_0,\sigma) = d_0$.
For the other states in $Q'$, let
$\sigma \in \Sigma$ and $i \in [0..m-1]$.
If $\delta(q_i, \sigma)$ is not in $F$, then
in order to deal with runs that would visit
states outside of $F$, we define
$\delta'(q_i,\sigma) = d_0$ and, for all $j \neq i$,
$\delta'(r_{i,j},\sigma) = d_0$.

Otherwise, for some $k \in [0..m-1]$ we have
$q_k = \delta(q_i,\sigma)$.
If $k = (i+1) \bmod m$, then we define
$\delta'(q_i,\sigma) = q_k$, and otherwise we
define
$\delta'(q_i,\sigma) = r_{k,(i+1) \bmod m}$.
For all $j \in [0..m-1]$ with $j \neq i$,
if $k = j$, we define $\delta'(r_{i,j},\sigma) = q_k$,
and otherwise we define
$\delta'(r_{i,j},\sigma) = r_{k,j}$.

Intuitively, for an input from $L(\M,F,q)$, in $B(\M,F,q)$
the states $q_i$ are visited in a repeating cyclic
order: $q_0, q_1, \ldots, q_{m-1}$, and the meaning of the
state $r_{i,j}$ is that at this point in the input,
$\M$ would be in state $q_i$, and the machine $B(\M,F,q)$
is waiting for a transition that would arrive at state $q_j$ in $\M$,
in order to proceed to state $q_j$ in $B(\M,F,q)$.\footnote{This construction is reminiscent of the construction transforming a generalized B\"uchi into a B\"uchi automaton~\cite{Vardi08,Choueka74}, by considering each state in $F$ as a singelton set of a generalized B\"uchi, but here we need to send transitions to states outside $F$ to a sink state.}
An example of the construction is shown in Fig.~\ref{fig:example-for-B-M-F-q};
the dead state and unreachable states are omitted for clarity.

\begin{figure}[t]
  \begin{center}
    \scalebox{1.0}{
      \begin{tikzpicture}[->,>=stealth',shorten >=1pt,auto,node
	  distance=1.8cm,semithick,initial text=,initial where=left]

        \node (start) {$\M:$};
	\node[state] (q0) [right of=start] {$q_0$};
	\node[state] (q1) [right of=q0] {$q_1$};
	\node[state] (q2) [below of=q1] {$q_2$};
	\node[state] (q3) [below of=q0] {$q_3$};

	\path (q0) edge [loop above]        node {$b$} (q0);
	\path (q0) edge [above, near start] node {$a$} (q2);
	\path (q1) edge [above]             node {$a$} (q0);
	\path (q1) edge [bend right]        node {$b$} (q2);
	\path (q2) edge [bend right]        node {$a$} (q1);
	\path (q2) edge                     node {$b$} (q3);
	\path (q3) edge [below, near start] node {$a$} (q1);
	\path (q3) edge                     node {$b$} (q0);

        \node (belowstart) [below of=start] {};
        \node (belowbelowstart) [below of=belowstart] {};
        \node (next) [below of=belowbelowstart] {$B(\M,F,q):$}; 

        \node[state,initial,accepting] (bq0) [right of=next] {$q_0$};
	\node[state] (br01) [above right of=bq0] {$r_{0,1}$};
	\node[state] (br21) [right of=bq0] {$r_{2,1}$};
	\node[state] (bq1)  [right of=br21] {$q_1$};
        \node[state] (br02) [below right of=bq1] {$r_{0,2}$};
        \node[state] (bq2) [below left of=br02] {$q_2$};
        \node[state] (br10) [left of=bq2] {$r_{1,0}$};
        \node[state] (br20) [left of=br10] {$r_{2,0}$};

	\path (bq0) edge [above]        node {$b$} (br01);
	\path (bq0) edge [below]        node {$a$} (br21);
        \path (br01) edge [loop above]  node {$b$} (br01);
        \path (br01) edge [right]       node {$a$} (br21);
        \path (br21) edge [above]       node {$a$} (bq1);
        \path (bq1) edge [right]        node {$a$} (br02);
        \path (bq1) edge [right]        node {$b$} (bq2);
        \path (br02) edge [loop right]  node {$b$} (br02);
        \path (br02) edge [below]       node {$a$} (bq2);
        \path (bq2) edge [above]        node {$a$} (br10);
        \path (br10) edge [right]       node {$a$} (bq0);
        \path (br10) edge [below, bend left] node {$b$} (br20);
        \path (br20) edge [above, bend left] node {$a$} (br10);
        
    \end{tikzpicture} }   
    \caption{Example of the construction of $B(\M,F,q)$ with $F = \{q_0,q_1,q_2\}$ and $q = q_0$.}
    \label{fig:example-for-B-M-F-q}
  \end{center}
\end{figure}
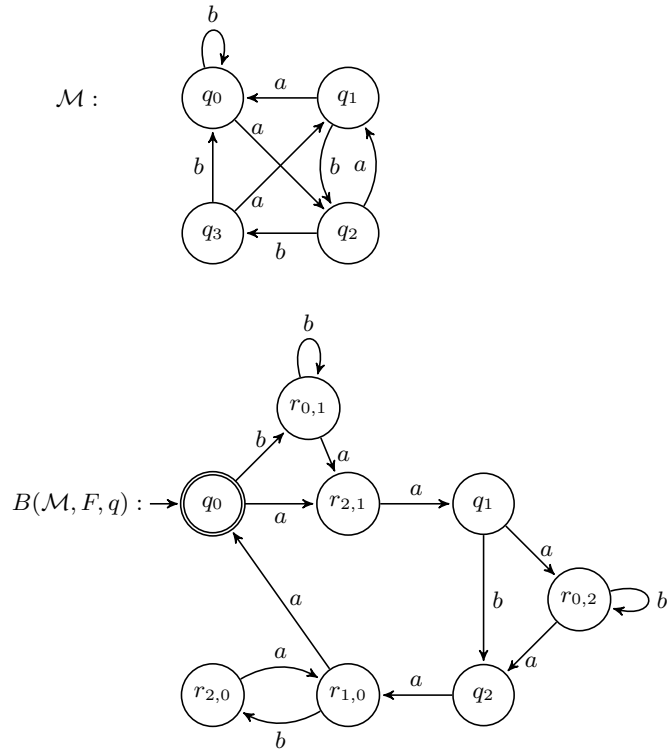

\begin{lemma}
  Let $\M$ be an automaton with alphabet $\Sigma$ and
  states $Q$, and let $F$ a SCC of $\M$ and $q \in F$.
  With these inputs, the procedure \proc{SCC-to-DBA}
  returns the DBA $B(\M,F,q)$, which accepts the language $L(\M,F,q)$
  and has $|F|^2+1$ states.
  The running time of \proc{SCC-to-DBA} is
  $O(|\Sigma|(|Q|+|F|^2))$.
\end{lemma}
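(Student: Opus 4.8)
The plan is to verify the three claims: that $\sema{B(\M,F,q)} = L(\M,F,q)$, that $B(\M,F,q)$ has $|F|^2+1$ states, and that the construction runs in time $O(|\Sigma|(|Q|+|F|^2))$; the language correctness is the substance, and the rest is short bookkeeping. First I would establish a \emph{tracking invariant}: for every finite prefix of an input $w$, as long as the run of $B(\M,F,q)$ has not entered the dead state $d_0$, the first index of its current state (the $i$ in either $q_i$ or $r_{i,j}$) equals the index of the state reached by $\M$ starting at $q$ on the same prefix. This follows by induction on the prefix length, reading off the cases of $\delta'$: from $q_i$ or $r_{i,j}$, a symbol $\sigma$ with $\delta(q_i,\sigma)=q_k\in F$ sends $B$ to a state with first index $k$, while $\delta(q_i,\sigma)\notin F$ sends $B$ to $d_0$. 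Since $d_0$ is absorbing, the run of $B$ is eventually trapped in $d_0$ if and only if the run of $\M$ from $q$ on $w$ leaves $F$ at some point.

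Next I would analyze the cyclic checkpoint mechanism. Define the \emph{target} of a non-dead state to be $q_{(i+1)\bmod m}$ at a checkpoint $q_i$ and $q_j$ at a waiting state $r_{i,j}$, where $m=|F|$. From the transition cases, the target is preserved at each step until the step on which $\M$ actually reaches it, at which moment $B$ moves to that target checkpoint and advances the target by one position cyclically. Hence the subsequence of checkpoint states $q_i$ through which $B$ passes is an initial segment (possibly infinite) of the cyclic sequence $q_0,q_1,\ldots,q_{m-1},q_0,\ldots$, and $B$ visits the accepting state $q_0$ exactly once per completed cycle. For $|F|=1$ the cycle degenerates, the target of $q_0$ being $q_0$ itself, and the same analysis applies.

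Combining these gives both inclusions. If $w\in L(\M,F,q)$, then $\M$ stays in $F$, so by the invariant $B$ never enters $d_0$; moreover $\M$ visits each state of $F$, hence each successive target, infinitely often, so $B$ completes infinitely many cycles and sees $q_0$ infinitely often and accepts. Conversely, if $B$ accepts then it visits $q_0$ infinitely often and so is never trapped in $d_0$, whence $\M$ stays in $F$; and between consecutive visits to $q_0$ the checkpoint analysis forces $B$ to pass through each of $q_1,\ldots,q_{m-1}$ as a checkpoint, so by the invariant $\M$ visits every state of $F$ infinitely often, giving $w\in L(\M,F,q)$. I expect the backward direction to be the main obstacle: one must argue that the single-target waiting states $r_{i,j}$ certify that \emph{every} state of $F$, not merely some, recurs infinitely often, and it is precisely the cyclic discipline on targets that supplies this.

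Finally, the state count is $|F|+|R|+1=m+m(m-1)+1=m^2+1=|F|^2+1$. For the running time, I would precompute a direct-address table over the states of $\M$ recording for each state whether it lies in $F$ and, if so, its index, in time $O(|Q|)$; each of the $O(|\Sigma|\cdot|F|^2)$ transitions of $B$ is then produced in constant time from a single evaluation of $\delta(q_i,\sigma)$ and one lookup. Adding the cost of reading $\M$, which is $O(|\Sigma|\cdot|Q|)$, the total is $O(|\Sigma|(|Q|+|F|^2))$.
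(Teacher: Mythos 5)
Your proposal is correct and follows essentially the same route as the paper: the tracking invariant is exactly the paper's inductive claim that if the run of $B(\M,F,q)$ is at $q_i$ or $r_{i,j}$ then $\M$ is at $q_i$, and your ``target'' analysis is a repackaging of the paper's observation that $q_0$ can only recur by progressing through $q_0,q_1,\ldots,q_{m-1}$ in cyclic order (the paper's forward direction instead names the explicit index sequence $i_{k,\ell}$ of cyclic visits, but the content is the same). The state count and running-time accounting also match the paper's.
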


\begin{proof}
  Suppose $w$ is in $L(\M,F,q)$.
  Let $q = s_0, s_1, s_2, \ldots$ be the sequence of states in the
  run of $\M$ from state $q$ on input $w$.
  This run visits only states in $F$ and visits each one of them
  infinitely many times.
  We next define a particular increasing sequence 
  $i_{k,\ell}$ of indices in $s$,
  where $k$ is a positive integer and $\ell \in [0,m-1]$.
  These indices mark particular visits to the states 
  $q_0, q_1, \ldots, q_{m-1}$
  in repeating cyclic order.
  The initial value is $i_{1,0} = 0$, marking the initial visit to $q_0$.
  If $i_{k,\ell}$ has been defined and $\ell < m-1$, then $i_{k,\ell+1}$
  is defined as the least natural number $j$ such that $j > i_{k,\ell}$
  and $s_j = q_{\ell+1}$, marking the next visit to $q_{\ell+1}$.
  If $\ell = m-1$, then $i_{k+1,0}$ is defined
  as the least natural number $j$ such that $j > i_{k,\ell}$ and
  $s_j = q_0$, marking the next visit to $q_0$.

  There is a corresponding division of $w$ into a concatenation
  of finite segments $w_{1,1}, w_{1,2}, \ldots, w_{1,m-1}, w_{2,0}, \ldots$
  between consecutive elements in the increasing sequence of indices.
  An inductive argument shows that in $B(\M,F,q)$, the
  prefix of $w$ up through $w_{k,\ell}$ arrives at the state $q_\ell$,
  so that $w$ visits $q_0$ infinitely often and is therefore
  accepted by $B(\M,F,q)$.

  Conversely, suppose $B(\M,F,q)$ accepts the $\omega$-word $w$.
  Let $s_0, s_1, s_2, \ldots$ be the run of $B(\M,F,q)$ on $w$,
  and let $t_0, t_1, t_2, \ldots$ be the run of $\M$ starting
  from $q$ on input $w$.
  An inductive argument shows that if $s_n = q_i$ then $t_n = q_i$, and
  if $s_n = r_{i,j}$ then $t_n = q_i$.
  Because the only way the run $s_0, s_1, \ldots$ can visit the final
  state $q_0$
  infinitely often is to progress through the states
  $q_0, q_1, \ldots q_{m-1}$ in repeating cyclic order, the run
  $t_0, t_1, \ldots$ must visit only states in $F$ and visit
  each of them infinitely often, so $w \in L(\M,F,q)$.
  
  The DBA $B(\M,F,q)$ has a dead state, and $|F|$ states for each
  element of $F$, for a total of $|F|^2 + 1$ states.
  The running time of \proc{SCC-to-DBA} is linear in the size of $\M$ and
  the size of the resulting DBA, that is,
  $O(|\Sigma|(|Q| + |F|^2))$, polynomial in the size of $\M$.
\end{proof}

We now show that this construction may be used to reduce the inclusion of
two DMAs to the inclusion of a DBA and a DMA.
Recall that if $\A$ is an acceptor and $q$ is a state of $\A$,
then $\A^q$ denotes the acceptor $\A$ with the initial state changed
to $q$.

\begin{lemma}
  Let $\aut{U}_1$ be a DMA with automaton $\M_1$ and a single final
  state set $F_1$.
  Let $\aut{U}_2$ be an arbitrary DMA over the same alphabet as $\aut{U}_1$,
  with automaton $\M_2$ and family of final state sets $\mathcal{F}_2$.
  Let $\M$ denote the product automaton $\M_1 \times \M_2$ 
  with unreachable states removed.
  Then $\sema{\aut{U}_1} \subseteq \sema{\aut{U}_2}$ iff for
  every state $(q_1,q_2)$ of $\M$ with $q_1 \in F_1$ we have
  $\sema{B(\M_1,F_1,q_1)} \subseteq \sema{\aut{U}_2^{q_2}}$.
\end{lemma}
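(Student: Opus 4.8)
The plan is to prove the two directions of the biconditional separately, relying on two facts. First, since $\aut{U}_1$ has the single final state set $F_1$, we have $w \in \sema{\aut{U}_1}$ exactly when $\Inf{\M_1}(w) = F_1$. Second, by the preceding lemma, $\sema{B(\M_1,F_1,q_1)} = L(\M_1,F_1,q_1)$, the set of words whose run in $\M_1$ from $q_1$ stays inside $F_1$ and visits every state of $F_1$ infinitely often. The central device is that $\Inf$ of a run is unchanged by deleting a finite prefix, so an accepting run can be split at the point where it settles permanently into $F_1$; and the product automaton $\M$ records exactly the pairs $(q_1,q_2)$ that $\M_1$ and $\M_2$ reach simultaneously on a common finite prefix.

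For the forward direction, I would assume $\sema{\aut{U}_1} \subseteq \sema{\aut{U}_2}$ and fix a state $(q_1,q_2)$ of $\M$ with $q_1 \in F_1$. Taking an arbitrary $w' \in \sema{B(\M_1,F_1,q_1)} = L(\M_1,F_1,q_1)$, I would use that $(q_1,q_2)$ is reachable in $\M$ (unreachable states were removed) to choose a finite word $u$ with $\M(u) = (q_1,q_2)$, hence $\M_1(u) = q_1$ and $\M_2(u) = q_2$. Setting $w = u\,w'$, the run of $\M_1$ on $w$ reaches $q_1$ after $u$ and thereafter, by definition of $L(\M_1,F_1,q_1)$, stays in $F_1$ visiting each of its states infinitely often, so $\Inf{\M_1}(w) = F_1$ and $w \in \sema{\aut{U}_1}$. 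By hypothesis $w \in \sema{\aut{U}_2}$, i.e. $\Inf{\M_2}(w) \in \mathcal{F}_2$; but $\M_2$ reaches $q_2$ after $u$, so $\Inf{\M_2}(w) = \Inf{\M_2^{q_2}}(w')$, giving $w' \in \sema{\aut{U}_2^{q_2}}$, as required.

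For the backward direction, I would assume the family of inclusions $\sema{B(\M_1,F_1,q_1)} \subseteq \sema{\aut{U}_2^{q_2}}$ and take any $w \in \sema{\aut{U}_1}$, so $\Inf{\M_1}(w) = F_1$. Since only the states of $F_1$ occur infinitely often, there is an index $N$ after which the run of $\M_1$ on $w$ visits only states of $F_1$; I would write $w = u\,w'$ with $u = w[1..N]$, and put $q_1 = \M_1(u) \in F_1$ and $q_2 = \M_2(u)$, so that $(q_1,q_2) = \M(u)$ is a (reachable) state of $\M$. Beyond $N$ the run from $q_1$ stays in $F_1$ and visits each of its states infinitely often, so $w' \in L(\M_1,F_1,q_1) = \sema{B(\M_1,F_1,q_1)}$. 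The hypothesis then gives $w' \in \sema{\aut{U}_2^{q_2}}$, i.e. $\Inf{\M_2^{q_2}}(w') \in \mathcal{F}_2$; since $\M_2$ reaches $q_2$ on $u$ we have $\Inf{\M_2}(w) = \Inf{\M_2^{q_2}}(w')$, whence $w \in \sema{\aut{U}_2}$.

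The routine parts are the two prefix-invariance identities for $\Inf$ and the verification that splitting the run preserves membership in $L(\M_1,F_1,q_1)$; these I expect to be straightforward. The step needing the most care is the backward direction's choice of the cut point $N$ together with the claim that the resulting pair $(q_1,q_2)$ is genuinely a state of $\M$ — this is precisely where the product construction and the removal of unreachable states are used to synchronize the shared finite prefix $u$ across the two automata. I would also record the degenerate case in which $F_1$ is not a SCC of $\M_1$: then no run has $\Inf{\M_1}(w) = F_1$, so $\sema{\aut{U}_1}$ and every $L(\M_1,F_1,q_1)$ are empty and both sides hold vacuously, which is consistent with the requirement of the preceding lemma that its input $F$ be a SCC.
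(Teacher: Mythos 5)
Your proof is correct and follows essentially the same route as the paper's: prepending a finite word $u$ reaching $(q_1,q_2)$ for one direction, and splitting $w = u\,w'$ at a point after which the $\M_1$-run stays inside $F_1$ for the other, with $\Inf$ invariant under deletion of finite prefixes. The only cosmetic difference is that you argue both implications directly while the paper argues each by contraposition; your closing remark about the degenerate case where $F_1$ is not a SCC is a harmless extra observation.
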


\begin{proof}
  Suppose that for some state $(q_1,q_2)$ of $\M$ with $q_1 \in F_1$, we have
  $w \in \sema{B(\M_1,F_1,q_1)} \setminus \sema{\aut{U}_2^{q_2}}$.
  Let $C_1$ be the set of states visited infinitely often in $B(\M_1,F_1,q_1)$
  on input $w$, and let $C_2$ be the set of states visited infinitely
  often in $\aut{U}_2^{q_2}$ on input $w$.
  Then $C_1 = F_1$ and $C_2 \not\in \mathcal{F}_2$.
  Let $u$ be a finite word such that $\M(u) = (q_1,q_2)$.
  Then $\Inf{\M_1}(uw) = C_1 = F_1$ and $\Inf{\M_2}(uw) = C_2$,
  so $uw \in \sema{\aut{U}_1} \setminus \sema{\aut{U}_2}$.

  Conversely,
  suppose that $w \in \sema{\aut{U}_1} \setminus \sema{\aut{U}_2}$.
  For $i = 1,2$ let $C_i = \Inf{\M_i}(w)$.
  Note that $C_1 = F_1$ and $C_2 \not\in \mathcal{F}_2$.
  Let $w = xw'$, where $x$ is a finite prefix of $w$ that is
  sufficiently long that
  the run of $\M_1$ on $w$ does not visit any state outside $C_1$ after
  $x$ has been processed, and for $i = 1,2$ let $q_i = \M_i(x)$.
  Then $(q_1,q_2)$ is a (reachable) state of $\M$, $q_1 \in F_1$,
  and the $\omega$-word
  $w'$, when processed by $\M_1$ starting at state $q_1$ visits only
  states of $C_1 = F_1$ and visits each of them infinitely many times,
  that is, $w' \in \sema{B(\M_1,F_1,q_1)}$.
  Moreover, when $w'$ is processed by $\M_2$ starting at state $q_2$, the
  set of states visited infinitely often is $C_2$, 
  which is not in $\mathcal{F}_2$.
  Thus, $w' \in \sema{B(\M_1,F_1,q_1)} \setminus \sema{\aut{U}_2^{q_2}}$.
\end{proof}

To turn this into an algorithm to test inclusion for two DMAs,
$\aut{U}_1$ with automaton $\M_1$ and a single final state set $F_1$ that is
a SCC of $\M_1$ and
 $\aut{U}_2$ with automaton $\M_2$,
we proceed as follows.
Construct the product automaton 
$\M = \M_1 \times \M_2$ with unreachable states removed,
and for each state $(q_1,q_2)$ of $\M$, if $q_1 \in F_1$,
construct the DBA $B(\M_1,F_1,q_1)$ and the DMA $\aut{U}_2^{q_2}$ and
test the inclusion of language accepted by the DBA in the language
accepted by the DMA.
If all of these tests return ``yes'', 
then the algorithm returns ``yes'' for
the inclusion question for $\aut{U}_1$ and $\aut{U}_2$.
Otherwise, for the first test 
that returns ``no'' and a witness $u(v)^\omega$,
the algorithm finds by breadth-first search a minimum length
finite word $u'$ such that $\M(u') = (q_1,q_2)$, 
and returns ``no'' and the witness $u'u(v)^\omega$.

Combining this with Claim~\ref{claim:arbitrary-to-one-final-state-set}, 
we have the following.

\begin{theorem}
  \label{DMA-DMA-reduced-to-DBA-DMA}
  Let $A$ be an algorithm to test inclusion for an arbitrary DBA
  and an arbitrary DMA over the same alphabet.
  There is an algorithm to test inclusion for an arbitrary pair
  of DMAs $\aut{U}_1$ and $\aut{U}_2$ over the same alphabet
  whose running time is linear in the sizes of $\aut{U}_1$ and
  $\aut{U}_2$ plus the time for at most $k \cdot |Q_1| \cdot |Q_2|$
  calls to the procedure $A$, where $k$ is the number of final state
  sets in $\aut{U}_1$, and $Q_i$ is the state set of $\aut{U}_i$
  for $i = 1,2$.
\end{theorem}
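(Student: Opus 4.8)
The plan is to assemble the theorem from three ingredients already in hand: Claim~\ref{claim:arbitrary-to-one-final-state-set}, the \proc{SCC-to-DBA} construction, and the preceding lemma that reduces single-final-set DMA inclusion to a family of DBA/DMA inclusions. First I would apply Claim~\ref{claim:arbitrary-to-one-final-state-set} to split the test $\sema{\aut{U}_1} \subseteq \sema{\aut{U}_2}$ into $k = |\mathcal{F}_1|$ separate tests $\sema{\aut{U}_{1,j}} \subseteq \sema{\aut{U}_2}$, one per final state set $F_j$ of $\aut{U}_1$, where $\aut{U}_{1,j}$ is $\aut{U}_1$ with $F_j$ as its only final set. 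It then suffices to solve each single-final-set subproblem using only calls to $A$ and to bound the total number of calls by $k \cdot |Q_1| \cdot |Q_2|$.

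For a fixed $j$, I would first test in linear time whether $F_j$ is a SCC of $\M_1$. If not, then since $\Inf{\M_1}(w)$ is always a SCC, no $w$ has $\Inf{\M_1}(w) = F_j$, so $\sema{\aut{U}_{1,j}} = \emptyset$, the inclusion holds vacuously, and no call to $A$ is issued. If $F_j$ is a SCC, I would invoke the reduction lemma: construct the product automaton $\M = \M_1 \times \M_2$ (with unreachable states removed) once, and for each reachable state $(q_1,q_2)$ of $\M$ with $q_1 \in F_j$, build the DBA $B(\M_1,F_j,q_1)$ with \proc{SCC-to-DBA} and the relocated DMA $\aut{U}_2^{q_2}$, then call $A$ on this pair. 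By that lemma, $\sema{\aut{U}_{1,j}} \subseteq \sema{\aut{U}_2}$ holds iff every such call returns ``yes''.

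What remains is the bookkeeping for witnesses and running time. If some call --- for subproblem $j$ at state $(q_1,q_2)$ --- returns ``no'' with witness $u(v)^\omega \in \sema{B(\M_1,F_j,q_1)} \setminus \sema{\aut{U}_2^{q_2}}$, I would find by breadth-first search in $\M$ a shortest $u'$ with $\M(u') = (q_1,q_2)$ and return $u'u(v)^\omega$; the first part of the reduction lemma's proof shows this word lies in $\sema{\aut{U}_{1,j}} \setminus \sema{\aut{U}_2} \subseteq \sema{\aut{U}_1} \setminus \sema{\aut{U}_2}$, a valid non-inclusion witness. For the count, $\M$ has at most $|Q_1| \cdot |Q_2|$ states, so subproblem $j$ issues at most $|Q_1| \cdot |Q_2|$ calls, and summing over the $k$ values of $j$ gives the claimed bound of $k \cdot |Q_1| \cdot |Q_2|$ calls to $A$. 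The running-time bound then follows by adding the cost of the splitting of Claim~\ref{claim:arbitrary-to-one-final-state-set}, the SCC tests, the single product construction, the per-pair DBA/DMA constructions, and the final search, each of which the preceding lemmas bound by a quantity linear in the size of the object it builds.

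The step I expect to need the most care is faithfulness to the hypotheses of the tools being invoked, rather than any single deep argument: the reduction lemma and \proc{SCC-to-DBA} presuppose that the single final set is a SCC, so the non-SCC case must be dispatched separately or else $B(\M_1,F_j,q_1)$ is undefined; and one must check that the witness $A$ returns for the \emph{relocated} acceptors $B(\M_1,F_j,q_1)$ and $\aut{U}_2^{q_2}$ genuinely lifts, through the prefix $u'$, to a witness against the inclusion of the \emph{original} $\aut{U}_1$ in $\aut{U}_2$.
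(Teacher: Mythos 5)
Your proposal is correct and follows essentially the same route as the paper: split via Claim~\ref{claim:arbitrary-to-one-final-state-set} into $k$ single-final-set subproblems, then for each one apply the product-automaton reduction lemma, issuing one call to $A$ per reachable state $(q_1,q_2)$ with $q_1 \in F_j$ and lifting any returned witness by a BFS-found prefix $u'$. Your explicit dispatch of the case where $F_j$ is not a SCC of $\M_1$ is a sensible extra precaution (the paper implicitly assumes $F_1$ is a SCC) but does not change the argument.
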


\subsection{A DBA/DMA inclusion algorithm}

In this section, we give a polynomial time algorithm 
\proc{Incl-DBA-DMA} to test inclusion
for an arbitrary DBA and an arbitrary DMA over the same alphabet.
Assume the algorithm has inputs consisting of
a DBA $\aut{B}$ and a DMA $\aut{U}$, where
\[\aut{B} = \la \Sigma, Q_1, (q_\iota)_1, \delta_1, F \ra\]
and
\[\aut{U} = \la \Sigma, Q_2, (q_\iota)_2, \delta_2, \mathcal{F} \ra.\]

Let $\M_1$ denote the automaton of $\aut{B}$ and $\M_2$ denote the
automaton of $\aut{U}$.
The \proc{Incl-DBA-DMA} algorithm computes the product automaton 
$\M = \M_1 \times \M_2$ with unreachable states removed.
Note that any elements of $\mathcal{F}$ that are not SCCs of
$\M_2$ may be removed (in time linear in the size of $\aut{U}$)
without affecting the language accepted by $\aut{U}$.

The overall strategy of the \proc{Incl-DBA-DMA} algorithm 
is to seek a nontrivial graph theoretic strongly connected
subset $C$ of states of $G(\M)$ such that 
$\pi_1(C) \cap F \neq \emptyset$
and $\pi_2(C) \not\in \mathcal{F}$.
If such a $C$ is found, the algorithm calls 
the \proc{Witness} procedure on inputs $\M$ and $C$
to find a witness $u(v)^\omega$ such that
$\Inf{\M}(u(v)^\omega) = C$.
Because $\Inf{\M_1}(u(v)^\omega) = \pi_1(C)$ and 
$\pi_1(C) \cap F \neq \emptyset$,
$u(v)^\omega$ is accepted by $\aut{B}$.
Because $\Inf{\M_2}(u(v)^\omega) = \pi_2(C)$ 
and $\pi_2(C) \not\in \mathcal{F}$,
$u(v)^\omega$ is not accepted by $\aut{U}$.

Once the product automaton $\M$ has been computed, 
the \proc{Incl-DBA-DMA} algorithm proceeds as follows.

\paragraph{Step one.}
Compute 
the graph theoretic nontrivial strongly connected components of $G(\M)$,
say $C_1, C_2, \ldots, C_k$.
If for any $i$ with $i \in [1..k]$ we have $\pi_1(C_i) \cap F \neq \emptyset$
and $\pi_2(C_i) \not\in \mathcal{F}$, return ``no'' and the witness
returned by the \proc{Witness} procedure on inputs $\M$ and $C_i$.

\paragraph{Step two.}
Otherwise,
for each $i \in [1..k]$ such that $C_i \cap F \neq \emptyset$, 
process $C_i$ as follows.
For each element $F_j \in \mathcal{F}$ such that $F_j \subseteq \pi_2(C_i)$,
and for each state $q \in F_j$, 
compute the graph theoretic nontrivial strongly connected components,
say $D_1, D_2, \ldots, D_m$,
of the subgraph of $G(\M)$ induced by all vertices $(q_1,q_2)$ such that
$q_1 \in \pi_1(C_i)$ and $q_2 \in F_j \setminus \{q\}$.
For each such $D_s$, test whether $\pi_1(D_s) \cap F \neq \emptyset$ and
$\pi_2(D_s) \not\in \mathcal{F}$.
If so, return ``no'' and the witness returned by \proc{Witness}
on inputs $\M$ and $D_s$.

\paragraph{Step three.}
If none of the tests in Steps one or two return ``no'' and a witness,
then return ``yes''.

\begin{theorem}
  \label{DBA-DMA-inclusion}
  The \proc{Incl-DBA-DMA} algorithm solves the inclusion problem
  for an arbitrary DBA $\aut{B}$ and an arbitrary DMA $\aut{U}$
  over the same alphabet $\Sigma$.
  Any returned witness $u(v)^\omega$ has length $O((|Q_1| \cdot |Q_2|)^2)$,
  and the running time is 
  $O(|\Sigma|\cdot|Q_1|\cdot|Q_2| + |\mathcal{F}| \cdot |Q_1| \cdot |Q_2|^2)$,
  where $Q_1$ is the state set of $\aut{B}$, $Q_2$ is the state set
  of $\aut{U}$, 
  and $\mathcal{F}$ is the family of final state sets of $\aut{U}$.
\end{theorem}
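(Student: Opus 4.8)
The plan is to reduce the whole argument to a single structural criterion: non-inclusion holds iff $G(\M)$ contains a nontrivial strongly connected set $C$ with $\pi_1(C) \cap F \neq \emptyset$ and $\pi_2(C) \notin \mathcal{F}$. First I would establish this criterion. For any $\omega$-word $w$, the set $\Inf{\M}(w)$ is a SCC of $\M$ (by the earlier Claim) and its projections satisfy $\pi_i(\Inf{\M}(w)) = \Inf{\M_i}(w)$; hence $w \in \sema{\aut{B}} \setminus \sema{\aut{U}}$ exactly when $\pi_1(\Inf{\M}(w)) \cap F \neq \emptyset$ and $\pi_2(\Inf{\M}(w)) \notin \mathcal{F}$, and conversely any such $C$ is realized as $\Inf{\M}(u(v)^\omega)$ by \proc{Witness} (Lemma~\ref{lem:C-to-witness}). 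Soundness is then immediate: whenever a test fires on a set $C$ (a maximal SCC $C_i$ in Step one, or a component $D_s$ in Step two), that $C$ satisfies the criterion, so the returned word lies in $\sema{\aut{B}} \setminus \sema{\aut{U}}$. I would also note that the preprocessing discarding every $F_\ell \in \mathcal{F}$ that is not a SCC of $\M_2$ is harmless, since $\Inf{\M_2}(w)$ is always a SCC.

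The core of the proof is completeness: if some bad $C$ exists, a test must fire. Let $C$ be a witnessing SCC and $C_i$ the maximal SCC of $G(\M)$ containing it. If $\pi_2(C_i) \notin \mathcal{F}$, then since $\pi_1(C_i) \supseteq \pi_1(C)$ meets $F$, Step one fires on $C_i$. The hard case, and the main obstacle, is $\pi_2(C_i) \in \mathcal{F}$: now $C$ is a proper sub-SCC, and I must show a single round of deleting one state in Step two isolates a bad component, even though $C$ may be nested several Muller-sets deep. The key idea is a minimality argument. Writing $C_2 = \pi_2(C)$, a SCC of $\M_2$ not in $\mathcal{F}$, consider $S = \{F_\ell \in \mathcal{F} : C_2 \subseteq F_\ell \subseteq \pi_2(C_i)\}$, which is nonempty as $\pi_2(C_i) \in S$. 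Choose $F^{*}$ minimal in $S$. Since $C_2 \notin \mathcal{F}$ we have $C_2 \subsetneq F^{*}$, so some $q \in F^{*} \setminus C_2$ exists. When Step two processes $C_i$, this $F^{*}$ (which satisfies $F^{*} \subseteq \pi_2(C_i)$), and this $q$, the set $C$ survives in the induced subgraph, because its first coordinates lie in $\pi_1(C_i)$ and its second coordinates lie in $F^{*} \setminus \{q\}$. Hence $C$ lies in a nontrivial component $D_s$ of that subgraph, and $D_s$ is a SCC of $\M$ with $C_2 \subseteq \pi_2(D_s) \subseteq F^{*} \setminus \{q\} \subsetneq F^{*}$. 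If $\pi_2(D_s)$ were in $\mathcal{F}$ it would lie in $S$ yet be strictly smaller than $F^{*}$, contradicting minimality; therefore $\pi_2(D_s) \notin \mathcal{F}$. As $\pi_1(D_s) \supseteq \pi_1(C)$ meets $F$, the Step-two test fires on $D_s$. This minimality choice is precisely why two levels of decomposition replace unbounded recursion.

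Finally I would handle the quantitative claims. The witness-length bound $O((|Q_1| \cdot |Q_2|)^2)$ follows directly from Lemma~\ref{lem:C-to-witness}, since any returned witness comes from \proc{Witness} applied to a component of $G(\M)$, which has at most $|Q_1| \cdot |Q_2|$ states. For the running time, constructing the product and the edges of $G(\M)$ costs $O(|\Sigma| \cdot |Q_1| \cdot |Q_2|)$. The remaining work is dominated by Step two, and I would account for the nested loops over maximal SCCs $C_i$, final sets $F_j \subseteq \pi_2(C_i)$, and states $q \in F_j$: each iteration performs a graph-theoretic SCC computation on an induced subgraph together with tests of the form $\pi_2(D_s) \in \mathcal{F}$, each of which compares a state set against the elements of $\mathcal{F}$. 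Summing the subgraph sizes (using $\sum_i |\pi_1(C_i)| \le |Q_1| \cdot |Q_2|$ and $|F_j| \le |Q_2|$) and the cost of the $\mathcal{F}$-membership tests yields $O(|\mathcal{F}| \cdot |Q_1| \cdot |Q_2|^2)$, giving the stated total. This bookkeeping is routine but must be carried out carefully to avoid an extra factor of $|\Sigma|$ or $|Q_2|$.
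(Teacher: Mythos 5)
Your proposal is correct and follows essentially the same route as the paper's proof: the same soundness argument via \proc{Witness}, and the same key completeness step of choosing a minimal element $F^*$ of the collection $S$ of Muller sets squeezed between $\pi_2(C)$ and $\pi_2(C_i)$, so that the single-state deletion in Step two must expose a component whose second projection is a proper subset of $F^*$ and hence (by minimality) not in $\mathcal{F}$. The quantitative bookkeeping also matches the paper's analysis.
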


\begin{proof}
  To establish the correctness of the \proc{Incl-DBA-DMA} algorithm, 
  we argue as follows.
  Suppose the returned value is ``no'' with a witness $u(v)^\omega$.
  Then the algorithm must have found
  a graph theoretic nontrivial strongly connected
  component $C$ of $G(\M)$ with $\pi_1(C) \cap F \neq \emptyset$ and
  $\pi_2(C) \not\in \mathcal{F}$ and called the \proc{Witness}
  procedure with inputs $\M$ and $C$, 
  which returned the witness $u(v)^\omega$,
  which correctly witnesses the answer ``no''.
  Thus, in this case, the returned value is correct and
  the witness has length at most $O((|Q_1| \cdot |Q_2|)^2)$.

  Suppose for the sake of contradiction that the algorithm
  \proc{Incl-DBA-DMA} returns ``yes'' but should not, that is, 
  there exists an $\omega$-word $w$ such that $w \in \sema{\aut{B}}$ and
  $w \not\in \sema{\aut{U}}$.
  Let $C$ denote $\Inf{\M}(w)$, the set of states visited
  infinitely often in the run of $\M$ on input $w$.
  Then because $w \in \sema{\aut{B}}$, $\pi_1(C) \cap F \neq \emptyset$.
  And because $w \not\in \sema{\aut{U}}$, $\pi_2(C) \not\in \mathcal{F}$.

  Clearly, $C$ is a subset of a unique $C_i$ computed in Step one, and
  $\pi_1(C_i) \cap F \neq \emptyset$.
  It must be that $\pi_2(C_i) \in \mathcal{F}$, because otherwise
  the algorithm would have returned ``no'' with the witness computed
  from $C_i$.
  Let $F_1, F_2, \ldots, F_\ell$ denote the elements of $\mathcal{F}$
  that are subsets of $\pi_2(C_i)$.

  Consider the collection
  \[S = \{F_r \mid r \in [1..\ell] \wedge \pi_2(C) \subseteq F_r\},\]
  of all the $F_r$ contained in $\pi_2(C_i)$ that contain $\pi_2(C)$.
  The collection $S$ is nonempty because $C \subseteq C_i$, and
  therefore $\pi_2(C) \subseteq \pi_2(C_i)$, and $\pi_2(C_i) \in \mathcal{F}$,
  so at least $\pi_2(C_i)$ is in $S$.
  Let $F_j$ denote a minimal element (in the subset ordering) of $S$.
  
  Then $\pi_2(C) \subseteq F_j$ but because $\pi_2(C) \not\in \mathcal{F}$,
  it must be that $\pi_2(C) \neq F_j$.
  Thus, there exists some $q \in F_j$ that is not in $\pi_2(C)$.
  When the algorithm considers this $F_j$ and $q$, then
  because $\pi_2(C) \subseteq F_j \setminus \{q\}$, $C$ is
  contained in one of the graph theoretic nontrivial
  strongly connected components $D_s$ of the subgraph
  of $G(\M)$ induced by the vertices $(q_1,q_2)$ such that
  $q_1 \in C_i$ and $q_2 \in F_j \setminus \{q\}$.

  Because $C \subseteq D_s$, and $\pi_1(C) \cap F \neq \emptyset$, we
  have $\pi_1(D_s) \cap F \neq \emptyset$.
  Also, $\pi_2(C) \subseteq \pi_2(D_s) \subseteq F_j$, but because
  $q \not\in \pi_2(D_s)$, $\pi_2(D_s)$ is a proper subset of $F_j$.
  When the algorithm considers $D_s$, 
  because $\pi_1(D_s) \cap F \neq \emptyset$,
  it must find that $\pi_2(D_s) \in \mathcal{F}$, or else it
  would have returned ``no''.
  But then $\pi_2(D_s)$ is in $S$ and is a proper subset of $F_j$, 
  contradicting our choice of $F_j$ as a minimal element of $S$.
  Thus, if all the tests in Steps one and two pass, 
  the returned value ``yes'' is correct.
  
  To analyze the running time of the \proc{Incl-DBA-DMA} algorithm,
  consider first the task of determining for a graph theoretic nontrivial
  strongly connected component $C$ of a subgraph of $G(\M)$,
  whether $\pi_1(C) \cap F \neq \emptyset$ and 
  whether $\pi_2(C) \in \mathcal{F}$.
  We assume that time linear in $|C|$ suffices for these tests, 
  using hash tables constructed in time linear in the sizes of 
  $\aut{B}$ and $\aut{U}$.

  In Step one, the computation of the graph theoretic
  strongly connected components may be carried out in time
  linear in the size of $\M$.
  Checking each resulting component $C_i$ can be done
  in time linear in $|C_i|$, so the total time is linear
  in the size of $\M$, that is $O(|\Sigma|\cdot|Q_1|\cdot|Q_2|)$.

  For Step two, for each component $C_i$ such that 
  $\pi_1(C_i) \cap F \neq \emptyset$
  there are at most $|\mathcal{F}|$ sets $F_s$
  to consider, and for each of them at most
  $|F_s|$ computations of nontrivial graph theoretic
  strongly connected components of subgraphs of $G(\M)$,
  and tests of each of the resulting components.
  Each $F_s$ has size at most $|Q_2|$, so the subgraph
  of $G(\M)$ considered has size at most $|C_i| \cdot |Q_2|$,
  and the total time in Step two is 
  $O(|\mathcal{F}| \cdot |Q_1| \cdot |Q_2|^2)$.
\end{proof}

Combining Theorem~\ref{DMA-DMA-reduced-to-DBA-DMA},
Theorem~\ref{DBA-DMA-inclusion}, 
and the reduction of equivalence to inclusion, we have the following.

\begin{corollary}
  \label{cor:DMA-inclusion-equivalence}
  There are polynomial time algorithms to solve the inclusion and
  equivalence problems for two arbitrary DMAs $\aut{U}_1$ and $\aut{U}_2$
  over the same alphabet $\Sigma$.
  If for $i = 1,2$, $Q_i$ is the set of states and $\mathcal{F}_i$
  is the family of final state sets of $\aut{U}_i$, then the
  length of any returned witness is $O((|Q_1|\cdot|Q_2|^2)$ and
  the total running time is
  \[O(|\mathcal{F}_1| \cdot |\mathcal{F}_2| \cdot |Q_1|^2 \cdot |Q_2|^3 +
  |\Sigma| \cdot |\mathcal{F}_1| \cdot |Q_1|^2 \cdot |Q_2|^2).\]
\end{corollary}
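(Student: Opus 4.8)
The plan is to assemble the corollary from the two main results of this section together with the trivial reduction of equivalence to inclusion. Since $\sema{\aut{U}_1} = \sema{\aut{U}_2}$ holds iff both $\sema{\aut{U}_1} \subseteq \sema{\aut{U}_2}$ and $\sema{\aut{U}_2} \subseteq \sema{\aut{U}_1}$, two inclusion tests settle equivalence, and a witness returned by either failing test is a witness of non-equivalence; so I would concentrate entirely on the inclusion problem for two arbitrary DMAs and let the equivalence claim follow at the cost of a constant factor. For inclusion, the idea is to take \proc{Incl-DBA-DMA} (Theorem~\ref{DBA-DMA-inclusion}) as the procedure $A$ whose existence is hypothesized in Theorem~\ref{DMA-DMA-reduced-to-DBA-DMA}. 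That theorem already packages the splitting of $\aut{U}_1$ into its $|\mathcal{F}_1|$ single-final-state-set acceptors (Claim~\ref{claim:arbitrary-to-one-final-state-set}) together with the product construction and the \proc{SCC-to-DBA} step, so the composite algorithm runs in time linear in the inputs plus at most $|\mathcal{F}_1| \cdot |Q_1| \cdot |Q_2|$ calls to \proc{Incl-DBA-DMA}.

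A preliminary step I would insert before invoking the reduction is to discard every $F_j \in \mathcal{F}_1$ that is not a SCC of $\M_1$: by the Claim that $\Inf{\M_1}(w)$ is always a SCC of $\M_1$, such an $F_j$ contributes the empty language to $\sema{\aut{U}_1}$ and may be dropped without changing the inclusion question. This is exactly what guarantees that each surviving call meets the precondition (single final state set that is a SCC) under which the \proc{SCC-to-DBA} lemma and the preceding lemma were proved, so their correctness transfers verbatim to the composite algorithm. Correctness of the overall answer is then immediate: if any call reports non-inclusion the composite reports ``no'', and otherwise every subproblem is an inclusion, whence so is the original.

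The bulk of the work, and the step I expect to be the main obstacle, is the running-time accounting. The quantity that must be tracked with care is the \emph{size of the DBA} handed to each call: by the \proc{SCC-to-DBA} lemma the DBA $B(\M_1,F_j,q_1)$ has $|F_j|^2+1 = O(|Q_1|^2)$ states, while $\aut{U}_2^{q_2}$ keeps its $|Q_2|$ states and $|\mathcal{F}_2|$ final state sets. Substituting DBA-state count $O(|Q_1|^2)$, DMA-state count $|Q_2|$, and $|\mathcal{F}_2|$ sets into the bound of Theorem~\ref{DBA-DMA-inclusion} gives a per-call cost of $O(|\Sigma| \cdot |Q_1|^2 \cdot |Q_2| + |\mathcal{F}_2| \cdot |Q_1|^2 \cdot |Q_2|^2)$, and multiplying by the number of calls produces the two-term expression with the Muller term dominating. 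The delicate point here is precisely the interaction between the quadratic blowup of the DBA and the iteration over the product states $(q_1,q_2)$ with $q_1 \in F_j$: one must sum the per-call costs set-by-set, keeping the $|F_j|$ factors explicit rather than crudely bounding $|F_j| \le |Q_1|$ inside every call, and use that only reachable product states trigger a call. Reconciling this summation with the exact exponents claimed in the statement is the part of the argument that demands the most care.

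Finally, for the witness I would take the word $u(v)^\omega$ returned by the first failing \proc{Incl-DBA-DMA} call, which certifies $\sema{B(\M_1,F_j,q_1)} \not\subseteq \sema{\aut{U}_2^{q_2}}$, and prepend a shortest word $u'$ reaching $(q_1,q_2)$ in the product $\M_1 \times \M_2$, exactly as described in the reduction preceding Theorem~\ref{DMA-DMA-reduced-to-DBA-DMA}. That $u' u(v)^\omega \in \sema{\aut{U}_1} \setminus \sema{\aut{U}_2}$ follows directly from the correctness arguments of the two theorems, and its length is polynomial in $|Q_1|$ and $|Q_2|$, dominated by the length of the witness produced inside the \proc{Incl-DBA-DMA} call on the blown-up DBA/DMA product.
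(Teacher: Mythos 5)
Your overall architecture is exactly the paper's: reduce equivalence to two inclusion tests, instantiate the procedure $A$ of Theorem~\ref{DMA-DMA-reduced-to-DBA-DMA} with \proc{Incl-DBA-DMA} from Theorem~\ref{DBA-DMA-inclusion}, discard members of $\mathcal{F}_1$ that are not SCCs of $\M_1$, and assemble the witness by prepending a shortest word reaching $(q_1,q_2)$ in $\M_1\times\M_2$. The paper offers no more detail than this (its entire justification is the sentence about repeating the DBA/DMA test once per element of $\mathcal{F}_1$ and per pair in $Q_1\times Q_2$), so on correctness of the algorithm and polynomiality of its running time your proposal is complete and faithful.

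The gap is in the step you yourself flag as unresolved: the claimed exponents do not follow from your accounting, and you never close the loop. The DBA passed to each call is $B(\M_1,F_j,q_1)$ with $|F_j|^2+1 = O(|Q_1|^2)$ states, so substituting into Theorem~\ref{DBA-DMA-inclusion} gives a per-call cost of $O(|\Sigma|\cdot|F_j|^2\cdot|Q_2| + |\mathcal{F}_2|\cdot|F_j|^2\cdot|Q_2|^2)$; summing over the at most $|F_j|\cdot|Q_2|$ relevant product states and then over $j$ yields $O\bigl(|\Sigma|\cdot|Q_2|^2\sum_j|F_j|^3 + |\mathcal{F}_2|\cdot|Q_2|^3\sum_j|F_j|^3\bigr)$, which in the worst case (a single $F_j$ of size $|Q_1|$) is $O(|\Sigma|\cdot|\mathcal{F}_1|\cdot|Q_1|^3\cdot|Q_2|^2 + |\mathcal{F}_1|\cdot|\mathcal{F}_2|\cdot|Q_1|^3\cdot|Q_2|^3)$ --- one factor of $|Q_1|$ more than the bound in the statement in each term. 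Keeping the $|F_j|$ explicit, as you suggest, does not rescue the exponent. The same blowup affects the witness length, since Theorem~\ref{DBA-DMA-inclusion} bounds it quadratically in the product of the \emph{DBA's} state count with $|Q_2|$. The stated bound is only obtained by plugging the original $|Q_1|$ (rather than the $O(|Q_1|^2)$ states of $B(\M_1,F_j,q_1)$) into the DBA/DMA cost, which is what the paper's one-line accounting implicitly does; your proposal correctly identifies this as the delicate point but, as written, proves polynomial time with a cubic dependence on $|Q_1|$ rather than the specific bound claimed.
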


The running time bound reflects repeating
once for each element of $\mathcal{F}_1$ and each pair $(q_1,q_2)$ in $Q_1 \times Q_2$,
the cost of a call to the procedure to test inclusion for a DBA
and a DMA.

\section{Computing the right congruence automaton}

Let $\aut{A}$ be a DBA, DCA, DPA, or DMA.
Recall that $\aut{A}^q$ is the acceptor $\aut{A}$
with the initial state changed to $q$.
Then $\sema{\aut{A}^q}$ is the set of all $\omega$-words
accepted from the state $q$.
Thus, if $q_1$ and $q_2$ are two states of $\aut{A}$,
testing the equivalence of $\aut{A}^{q_1}$ to $\aut{A}^{q_2}$
determines whether these two states have the same
right congruence class, and, if not, 
returns a witness $u(v)^\omega$ that is accepted 
from exactly one of the two states.
The following is a consequence of 
Corollaries~\ref{cor:DPA-inclusion-equivalence}, 
\ref{cor:DBA-DCA-inclusion-equivalence}, and
\ref{cor:DMA-inclusion-equivalence}.

\begin{lemma}
  \label{lem:right-congruence-test}
  There is a polynomial time procedure to test 
  whether two states of an arbitrary
  DBA, DCA, DPA, or DMA $\aut{A}$ have the same right congruence class, 
  returning the answer ``yes'' if they do, and
  returning ``no'' and a witness $u(v)^{\omega}$ accepted from
  exactly one of the states if they do not.
  A returned witness $u(v)^\omega$ has length $O(|Q|^4)$,
  where $Q$ is the set of states of $\aut{A}$.
\end{lemma}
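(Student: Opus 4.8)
The plan is to reduce the right-congruence test directly to the equivalence problem for the two acceptors $\aut{A}^{q_1}$ and $\aut{A}^{q_2}$ obtained from $\aut{A}$ by changing the initial state, and then invoke the equivalence algorithm appropriate to the type of $\aut{A}$. First I would record the key observation that, for reachable states $q_1$ and $q_2$, these two states lie in the same class of the right congruence $\sim_{\sema{\aut{A}}}$ if and only if $\sema{\aut{A}^{q_1}} = \sema{\aut{A}^{q_2}}$. To see this, pick finite words $u_1, u_2$ with $\aut{A}(u_1) = q_1$ and $\aut{A}(u_2) = q_2$; since a finite prefix does not affect which states are visited infinitely often, we have $\Inf{\aut{A}}(u_i z) = \Inf{\aut{A}^{q_i}}(z)$ for every $z \in \Sigma^\omega$, so $u_i z \in \sema{\aut{A}}$ iff $z \in \sema{\aut{A}^{q_i}}$. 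Hence $u_1 \sim_{\sema{\aut{A}}} u_2$ exactly when $\sema{\aut{A}^{q_1}}$ and $\sema{\aut{A}^{q_2}}$ agree on all $\omega$-words, i.e.\ when the two restarted acceptors are equivalent, and a word accepted from exactly one of the two states is precisely a witness distinguishing the two languages.

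Next I would observe that $\aut{A}^{q_1}$ and $\aut{A}^{q_2}$ are acceptors of the same type as $\aut{A}$ and, apart from the initial state, are identical; in particular each has state set $Q$ and the same acceptance condition. So, depending on the type of $\aut{A}$, I would call the DPA equivalence algorithm of Corollary~\ref{cor:DPA-inclusion-equivalence} when $\aut{A}$ is a DPA, the DBA/DCA equivalence algorithm of Corollary~\ref{cor:DBA-DCA-inclusion-equivalence} when $\aut{A}$ is a DBA or DCA, and the DMA equivalence algorithm of Corollary~\ref{cor:DMA-inclusion-equivalence} when $\aut{A}$ is a DMA. Each of these runs in polynomial time and returns ``yes'' or ``no'' together with a distinguishing witness $u(v)^\omega$, which by the previous paragraph is exactly a word accepted from one of $q_1, q_2$ but not the other.

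Finally I would verify the witness-length bound by substituting $|Q_1| = |Q_2| = |Q|$ into the bounds of the three corollaries, since changing the initial state does not change the number of states. In each case the stated witness length is at most $O((|Q_1| \cdot |Q_2|)^2)$, which becomes $O(|Q|^4)$, and the running time is polynomial for the same reason. I do not expect any genuine obstacle here: the real work has already been done in establishing the polynomial-time equivalence algorithms of the three corollaries, and this lemma is essentially a uniform packaging of those results together with the elementary observation relating equivalence of the restarted acceptors to equality of right-congruence classes. The only points requiring mild care are that the reduction is stated for reachable states, so that each $q_i$ genuinely represents a right-congruence class, and that the three witness bounds all collapse to $O(|Q|^4)$ under the substitution $|Q_1| = |Q_2| = |Q|$.
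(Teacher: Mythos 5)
Your proposal is correct and follows essentially the same route as the paper: the lemma is obtained by observing that $q_1$ and $q_2$ share a right congruence class iff $\sema{\aut{A}^{q_1}} = \sema{\aut{A}^{q_2}}$, then invoking the equivalence algorithm of the appropriate corollary, with the witness bound $O((|Q|\cdot|Q|)^2) = O(|Q|^4)$. Your added justification via $\Inf{}$ being unaffected by finite prefixes, and the remark about reachability, are fine elaborations of the same argument.
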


This in turn can be used in an algorithm
\proc{Right-Con} to construct the right congruence automaton 
for a given DBA (or DCA, DPA, or DMA).
Let the input acceptor be
\[\aut{A} = \la \Sigma, Q, q_\iota, \delta, \alpha \ra.\]
The algorithm constructs an automaton
\[\M = \la \Sigma, Q', \varepsilon, \delta' \ra,\]
isomorphic to the right congruence automaton of $\sema{\aut{A}}$
in which the states are represented by finite words
and the initial state is the empty word $\varepsilon$.

We describe the process of constructing $Q'$ and $\delta'$,
where $Q'$ initially contains just the empty word, and $\delta'$
is completely undefined.
While there exists a word $x \in Q'$ and a symbol $\sigma \in \Sigma$
such that $\delta'(x,\sigma)$ has not yet been defined,
loop through the words $y \in Q'$ and ask whether 
the states
$\delta(q_\iota,x\sigma)$
and $\delta(q_\iota,y)$ have the same right congruence class
in $\aut{A}$.
If so, then define $\delta'(x,\sigma)$ to be $y$.
If no such $y$ is found, then the finite word $x\sigma$ is added as
a new state to $Q'$, and the process continues.

This process must terminate because the right congruence automaton
of $\aut{A}$ cannot have more than $|Q|$ states.
When it terminates, the automaton $\M$ is isomorphic to the
right congruence automaton of $\aut{A}$.

Note that each time the equivalence algorithm returns ``no'' to a
call with $\delta(q_\iota,x \sigma)$ and $\delta(q_\iota,y)$, it
also returns a witness $u(v)^\omega$ such that exactly one of
$x\sigma u(v)^\omega$ and $yu(v)^\omega$ is accepted by $\aut{A}$.
Thus, if the \proc{Right-con} algorithm collects, 
for each new state $x \sigma$ added to $Q'$, 
the set of witnesses $u(v)^\omega$ distinguishing
it from the previous elements of $Q'$, the resulting set $D$ of
witnesses are sufficient to distinguish every pair of states of
the final automaton $\M$.

\begin{theorem}
  \label{right-con-algorithm}
  The \proc{Right-con} algorithm with input $\aut{A}$ (a DBA, DCA,
  DPA or DMA) runs in polynomial time and returns $\M$, an
  automaton isomorphic to the right congruence automaton of
  $\sema{\aut{A}}$, and $D$, a set of witnesses $u(v)^\omega$ such
  that for any $x_1, x_2 \in \Sigma^*$, if
  $\M(x_1) \neq \M(x_2)$ then there exists some $u(v)^\omega \in D$
  such that exactly one of $x_1u(v)^\omega$ and $x_2u(v)^\omega$ is
  in $\sema{\aut{A}}$.
\end{theorem}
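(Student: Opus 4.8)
The plan is to verify the three claims the theorem makes about the output of \proc{Right-con}: that the algorithm runs in polynomial time, that the returned automaton $\M$ is isomorphic to the right congruence automaton of $\sema{\A}$, and that the collected set $D$ distinguishes every pair of inequivalent states of $\M$. The first two are already argued informally in the text preceding the theorem, so the proof largely consists of consolidating those observations and then carefully establishing the distinguishing property of $D$, which I expect to be the main point requiring care.

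For termination and polynomial running time, first I would observe that every state added to $Q'$ is a finite word $x$ whose reached state $\delta(q_\iota, x)$ lies in a distinct right congruence class of $\sema{\A}$. Since $\sim_{\M_{\A}}$ refines $\sim_{\sema{\A}}$ (the right congruence of the recognized language cannot have more classes than the automaton has states), $Q'$ can contain at most $|Q|$ words, bounding the number of outer iterations. Each attempt to define a transition $\delta'(x,\sigma)$ loops over at most $|Q|$ candidates $y$ and invokes the equivalence test of Lemma~\ref{lem:right-congruence-test}, which runs in polynomial time. Since there are at most $|Q| \cdot |\Sigma|$ transitions to define, the total number of equivalence queries is $O(|Q|^2 \cdot |\Sigma|)$, each polynomial, yielding an overall polynomial bound.

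For the isomorphism, I would argue that the map sending each word $x \in Q'$ to its right congruence class $[x]_{\sim_{\sema{\A}}}$ is a well-defined bijection respecting transitions. Well-definedness and injectivity follow because a new state $x\sigma$ is added precisely when no existing $y \in Q'$ is found equivalent to it, so distinct elements of $Q'$ name distinct classes; surjectivity follows from termination, since the loop continues until every transition is defined, which by the reachability of all classes forces every class to be represented. That $\delta'$ mirrors $\sim_{\sema{\A}}$ is immediate from the construction: $\delta'(x,\sigma)$ is set to the representative $y$ equivalent to $x\sigma$, matching the transition $[x]\mapsto[x\sigma]$ of $\M_{\sim_{\sema{\A}}}$.

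The distinguishing property of $D$ is the step warranting the most attention. The set $D$ collects, for each newly added state $x\sigma$, the witnesses returned when $x\sigma$ was found inequivalent to each earlier element of $Q'$; by Lemma~\ref{lem:right-congruence-test} each such witness $u(v)^\omega$ separates $\delta(q_\iota, x\sigma)$ from the corresponding $\delta(q_\iota, y)$. The key claim is that this accumulated set separates \emph{every} pair of distinct states of $\M$, not merely the pairs explicitly tested. I would argue this by noting that when $x\sigma$ is added, it is compared against \emph{all} prior states of $Q'$, so $D$ contains a separating witness for $x\sigma$ against each earlier state; proceeding by induction on the order in which states are added establishes that for any two distinct final states $z_1, z_2 \in Q'$, whichever was added later was tested against the earlier one, contributing a witness $u(v)^\omega$ to $D$ with exactly one of $z_1 u(v)^\omega$, $z_2 u(v)^\omega$ in $\sema{\A}$. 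Translating back through the isomorphism, for any $x_1, x_2$ with $\M(x_1) \neq \M(x_2)$, these reach distinct states $z_1, z_2$, and the corresponding witness in $D$ distinguishes $x_1 u(v)^\omega$ from $x_2 u(v)^\omega$, as required.
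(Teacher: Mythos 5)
Your proposal is correct and follows essentially the same route as the paper, which states this theorem without a separate proof and relies on the informal argument in the preceding text (termination because the right congruence automaton has at most $|Q|$ states, isomorphism by construction, and $D$ collected from the pairwise equivalence tests). Your write-up simply consolidates and fleshes out those observations, including the one detail worth making explicit --- transferring a witness separating two representative words $z_1,z_2\in Q'$ to arbitrary $x_1,x_2$ via the right congruence --- and does so correctly.
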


\bibliography{df}
\bibliographystyle{plain}

\end{document}